\newcommand{\MS}[1]{\ensuremath{\Delta}}
\newcommand{\RMinQ}{{\textsf{RMinQ}}}
\newcommand{\RLMinQ}{{\textsf{RLMinQ}}}
\newcommand{\RRMinQ}{{\textsf{RRMinQ}}}
\newcommand{\RkMinQ}{{\textsf{RkMinQ}}}
\newcommand{\RMaxQ}{{\textsf{RMaxQ}}}
\newcommand{\RLMaxQ}{{\textsf{RLMaxQ}}}
\newcommand{\RRMaxQ}{{\textsf{RRMaxQ}}}
\newcommand{\RkMaxQ}{{\textsf{RkMaxQ}}}
\newcommand{\NLN}{{\textsf{NLN}}}
\newcommand{\ANLN}{{\textsf{ANLN}}}
\newcommand{\PSV}{{\textsf{PSV}}}
\newcommand{\NSV}{{\textsf{NSV}}}
\newcommand{\PLV}{{\textsf{PLV}}}
\newcommand{\NLV}{{\textsf{NLV}}}
\newcommand{\NRS}{{\textsf{NRS}}}
\newcommand{\rank}{{\textsf{rank}}}
\newcommand{\select}{{\textsf{select}}}
\newcommand{\findopen}{{\textsf{findopen}}}
\newcommand{\findclose}{{\textsf{findclose}}}
\newcommand{\leftmost}{\mathit{leftmost}}
\newcommand{\MinA}{{\textsf{Min(A)}}}
\newcommand{\MaxA}{{\textsf{Max(A)}}}
\newcommand{\CMinA}{{\textsf{CMin(A)}}}
\newcommand{\CMaxA}{{\textsf{CMax(A)}}}
\newcommand{\MinAp}{{\textsf{Min(A')}}}
\newcommand{\CMinAp}{{\textsf{CMin(A')}}}
\newtheorem{theorem}{Theorem}
\newtheorem{lemma}{Lemma}[section]
\begin{document}
%\begin{frontmatter}

\title{Simultaneous encodings for range and next/previous larger/smaller value queries\footnote{Preliminary version of these results have appeared in the proceedings of the 21st International Computing and Combinatorics Conference (COCOON-2015)~\cite{DBLP:conf/cocoon/JoS15}}}
\author{ Seungbum Jo and Srinivasa~Rao~Satti \\
Seoul National University, South Korea \\
{\tt sbcho,ssrao@tcs.snu.ac.kr}
}
%\ead{sbcho@tcs.snu.ac.kr}
%\author{}
%\ead{ssrao@cse.snu.ac.kr}
%
%\address{Seoul National University, South Korea}
\date{}
\maketitle
\begin{abstract}
Given an array of $n$ elements from a total order, we propose encodings 
that support various range queries (range minimum, range maximum 
and their variants), and previous and next smaller/larger value queries.
When query time is not of concern, we obtain a $4.088n + o(n)$-bit 
encoding that supports all these queries. 
For the case when we need to support all these queries in constant time,
we give an encoding that takes $4.585n + o(n)$ bits, where $n$ is the length 
of input array. 
%
%(\RMinQ{}, \RMaxQ{}, \RLMinQ{}, \RLMaxQ{}, \RRMinQ{}, \RRMaxQ{}, \RkMinQ{} and  \RkMaxQ{}) 
%and previous and next smaller(larger) value queries in constant time when $n$ is the size of input array.
This improves the $5.08n+o(n)$-bit encoding obtained by encoding the colored $2d$-Min and Max heaps proposed by Fischer~[TCS, 2011].
%\cite{Fischer11} 
%which uses the TC-encoding of colored $2d$-Min and Max heap.  
We first extend the original DFUDS~[Algorithmica, 2005] %\cite{BDMRRR05} 
encoding of the colored 2d-Min (Max) heap 
that supports the queries in constant time.
Then, we combine the extended DFUDS of $2d$-Min heap and $2d$-Max heap using the 
Min-Max encoding of Gawrychowski and Nicholson~[ICALP, 2015] %\cite{Gawry14} 
with some modifications.
We also obtain encodings that take lesser space and support a subset of these queries.
%If query time is not of concern, we obtain the $4.088n + o(n)$ bits-encoding that supports all these queries\end{abstract}
\end{abstract}
%\begin{keyword}
%Range minimum queries, next/previous larger values, $2d$-min heap, encoding, balanced parenthesis sequence.
%\end{keyword}
%\end{frontmatter}

\section{Introduction}
\label{sec:intro}
Given an array $A[1 \dots n]$ of $n$ elements from a total order.
For $1 \leq i \leq j \leq n$, suppose that there are $m$ ($l$) positions 
$i \leq p_1 \leq \dots \leq p_m \leq j$ ($i \leq q_1 \leq \dots \leq q_l \leq j$ ) in $A$ 
which are the positions of minimum (maximum) values between $A[i]$ and $A[j]$.
Then we can define various range minimum (maximum) queries as follows.
\begin{itemize}
\item{Range Minimum Query} ($\RMinQ{}_A(i, j)$) : Return an arbitrary position among $p_1, \dots, p_m$.
\item{Range Leftmost Minimum Query} ($\RLMinQ{}_A(i, j)$) : Return $p_1$.
\item{Range Rightmost Minimum Query} ($\RRMinQ{}_A(i, j)$) : Return $p_j$.
\item{Range $k$-th Minimum Query} ($\RkMinQ{}_A(i, j)$) : Return $p_k$ (for $1\le k \le m$). %Otherwise the answer does not exist.
\item{Range Maximum Query} ($\RMaxQ{}_A(i, j)$) : Return an arbitrary position among $q_1, \dots, q_l$.
\item{Range Leftmost Maximum Query} ($\RLMaxQ{}_A(i, j)$) : Return $q_1$.
\item{Range Rightmost Maximum Query} ($\RRMaxQ{}_A(i, j)$) : Return $q_l$.
\item{Range $k$-th Maximum Query} ($\RkMaxQ{}_A(i, j)$) : Return $q_k$ (for $1 \le k \le l$). %Otherwise the answer does not exist.
\end{itemize}

Also for $1\le i \le n$, we consider following additional queries on $A$. 
\begin{itemize}
\item{Previous Smaller Value} ($\PSV{}_A(i)$) : $\max{(j : j < i, A[j] < A[i])}$.
\item{Next Smaller Value} ($\NSV{}_A(i)$) : $\min{(j : j > i, A[j] < A[i])}$.
\item{Previous Larger Value} ($\PLV{}_A(i)$) : $\max{(j : j < i, A[j] > A[i])}$.
\item{Next Larger Value} ($\NLV{}_A(i)$) : $\min{(j : j > i, A[j] > A[i])}$.
\end{itemize}
For define above four queries formally, we assume that $A[0] = A[n+1] = -\infty$ for $\PSV{}_A(i)$ and $\NSV{}_A(i)$.
Similarly we assume that $A[0] = A[n+1] = \infty$ for $\PLV{}_A(i)$ and $\NLV{}_A(i)$.

%We also consider the queries \textit{previous smaller/larger value} ($\PSV{}_A(i)$ and $\PLV{}_A(i)$)
%which return the position of the nearest smaller/larger value among $A[1] \dots A[i-1]$ in $A$. 
%We can also define the \textit{next smaller/larger value} queries ($\NSV{}_A(i)$ and $\NLV{}_A(i)$) analogously.

Our aim is to obtain space-efficient encodings that support these queries efficiently.
An encoding should support the queries without accessing the input array (at query time).
%In the \textit{encoding model}, we do not have access to the  
%original input array $A$ at query time. 
The minimum size of an encoding is also referred to as the  
 \textit{effective entropy} of the input data (with respect to the queries)~\cite{gikrs-isaac11}.
We assume the standard word-RAM model~\cite{miltersen-survey} with word size $\Theta(\lg{n})$.

\paragraph{Previous Work}  
The range minimum/maximum problem has been well-studied in the literature.
It is well-known~\cite{BenderF00} that finding $\RMinQ{}_A$ can be transformed to the problem of 
finding the LCA (Lowest Common Ancestor) between (the nodes corresponding to) the two query positions in the Cartesian tree constructed on $A$. 
Furthermore, since different topological structures of the Cartesian tree on $A$ give rise to 
different set of answers for $\RMinQ{}_A$ on $A$, one can obtain an 
information-theoretic lower bound of 
%to encoding the Cartesian tree of $n$ nodes is
$2n-\Theta(\lg{n})$\footnote{We use $\lg{n}$ to denote $\log_2{n}$} bits
on the encoding of $A$ that answers $\RMinQ{}$ queries.
%this bound gives the lower bound to find the $\RMinQ{}_A$.
Sadakane~\cite{Sadakane2007a} proposed the $4n+o(n)$-bit encoding with constant query time for $\RMinQ{}_A$ problem
using the balanced parentheses (BP)~\cite{mr-sjc01} of the Cartesian tree of $A$ with some additional nodes. 
%This paper uses the balanced parentheses (BP)~\cite{mr-sjc01} of the Cartesian tree of $A$ with some additional nodes to match
%the indices in $A$ and its corresponding parentheses in BP representation. 
Fischer and Heun~\cite{fh-sjc11} introduced the $2d$-Min heap, which is a
variant of the Cartesian tree, and showed how to encode it using the 
Depth first unary degree sequence (DFUDS)~\cite{BDMRRR05} representation
in $2n+o(n)$ bits which supports $\RMinQ{}_A$ queries in constant time.
Davoodi et al. show that same $2n+o(n)$-bit encoding with constant query time
can be obtained by encoding the Cartesian trees~\cite{DavoodiRS12}. 
%with constant time transformation between ordinal and binary trees~\cite{DavoodiRS12}.
For $\RkMinQ{}_A$, Fischer and Heun~\cite{FischerH10} defined the
\textit{approximate range median of minima query} problem which returns 
a position $\RkMinQ{}_A$ for some $\frac{1}{16}m \le k \le \frac{15}{16}m$,
%an approximate median position between
%more than one positions of $\RMinQ{}_A$. 
%(The previous encodings always answer $\RRMinQ{}_A$ or $\RLMinQ{}_A$.)
and proposed an encoding that uses $2.54n+o(n)$ bits and supports the 
{\it approximate $\RMinQ{}_A$} queries in constant time, using a \textit{Super Cartesian tree}.
% which has colors associated with the edges of a Cartesian tree.

For $\PSV{}_A$ and $\NSV{}_A$, if all elements in $A$ are distinct, 
then $2n+o(n)$ bits are enough to answer the queries in constant time, 
by using the $2d$-Min heap of Fischer and Heun~\cite{fh-sjc11}.
For the general case, Fischer~\cite{Fischer11} proposed the \textit{colored $2d$-Min heap},
%which has colors for each vertex on the $2d$-Min heap.
and proposed an optimal $2.54n+o(n)$-bit encoding which can answer $\PSV{}_A$ and $\NSV{}_A$ in constant time.
%that uses an optimal $2.54n+o(n)$ bits to answer $\PSV{}_A$ and $\NSV{}_A$ in constant time.
As the extension of the $\PSV{}_A$ and $\NSV{}_A$, one can define the \textit{Nearest Larger Neighbor($\NLN{}(i)$)} on $A$
which returns $\PSV{}_A(i)$ if $i-\PSV{}_A(i) \le \NSV{}_A(i) - i$ and returns $\NSV{}_A(i)$ otherwise.
This problem was first discussed by Berkman et al.~\cite{BerkmanSV93} and they proposed a parallel algorithm to answer $\NLN{}$ queries for all positions on the array (this problem is defined as \textit{All-Nearest Larger Neighbor ($\ANLN{}$) problem}.) 
and Asano and Kirkpatrick~\cite{AsanoK13} proposed time-space tradeoff algorithms for $\ANLN{}$ problem.
Jayapaul et al.~\cite{IWOCA2014} proposed $2n+o(n)$-bit encoding 
which supports an $\NLN{}(i)$ on $A$ in constant time if all elements in $A$ are distinct.

%by encoding the colored $2d$-Min heap with TC-encoding~\cite{fm-swat08}, 

One can support both $\RMinQ{}_A$ and $\RMaxQ{}_A$ in constant time trivially using the encodings for $\RMinQ{}_A$ and $\RMaxQ{}_A$ queries, using a total of $4n+o(n)$ bits.
Gawrychowski and Nicholson reduce this space to $3n+o(n)$ bits while maintaining constant time query time~\cite{Gawry14}.
Their scheme also can support $\PSV{}_A$ and $\PLV{}_A$ in constant time when there are no consecutive equal elements in $A$.
%But their scheme works only when there are no consecutive equal elements in $A$. 
%Their encoding simulates the Fischer and Heun's stack-based algorithm to construct the DFUDS of $2d$-Min and Max heap~\cite{fh-sjc11}
%in a two separated bit strings whose total length is at most $3n$ bits.

\paragraph{Our results}
In this paper, we first extend the original DFUDS~\cite{BDMRRR05} for colored 2d-Min(Max) heap 
that supports the queries in constant time.
Then, we combine the extended DFUDS of $2d$-Min heap and $2d$-Max heap using 
Gawrychowski and Nicholson's Min-Max encoding~\cite{Gawry14} with some modifications.
As a result, we obtain the following non-trivial encodings that support a wide range of queries. 

%\begin{theorem}
%\label{theorem:minmax}
%For an array $A[1 \dots n]$ of length $n$, there is an encoding which takes at most $3.322n+o(n)$ bits
%and supports $\RMinQ{}_A$, $\RRMaxQ{}_A$, $\RRMinQ{}_A$, $\RRMaxQ{}_A$,  $\PSV{}_A$, and 
% $\PLV{}_A$ in constant time.
%\end{theorem}

\begin{theorem}
\label{theorem:minmax}
%For an array $A[1 \dots n]$ containing $n$ elements from a total order, one can obtain 
An array $A[1 \dots n]$ containing $n$ elements from a total order can be encoded using
\begin{enumerate}
\item[(a)]  at most $3.17n+o(n)$ bits to support $\RMinQ{}_A$, $\RMaxQ{}_A$, $\RRMinQ{}_A$, 
$\RRMaxQ{}_A$,  $\PSV{}_A$, and $\PLV{}_A$ queries;
\item[(b)]  at most $3.322n+o(n)$ bits to support the queries in (a) in constant time;
\item[(c)] at most $4.088n+o(n)$ bits to support $\RMinQ{}_A$, $\RRMinQ{}_A$, 
$\RLMinQ{}_A$, $\RkMinQ{}_A$, $\PSV{}_A$, $\NSV{}_A$, $\RMaxQ{}_A$, 
$\RRMaxQ{}_A$, $\RLMaxQ{}_A$, $\RkMaxQ{}_A$, $\PLV{}_A$ and $\NLV{}_A$ queries; and
\item[(d)]  at most $4.585n+o(n)$ bits to support the queries in (c) in constant time.
\end{enumerate}
If the array contains no two consecutive equal elements, then (a) and (b) take $3n+o(n)$ bits, and
(c) and (d) take $4n+o(n)$ bits.
\end{theorem}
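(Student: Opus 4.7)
The natural plan is to build on Fischer's colored $2d$-Min heap together with the DFUDS representation and combine it with Gawrychowski and Nicholson's Min/Max-merging framework. I would proceed in four stages, and each of the four bounds in the theorem would fall out of one of them.

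First, I would extend the DFUDS encoding of the (uncolored) $2d$-Min heap so that, in addition to the usual $\RMinQ{}_A$, $\PSV{}_A$, $\NSV{}_A$, it also supports $\RLMinQ{}_A$, $\RRMinQ{}_A$ and $\RkMinQ{}_A$ in constant time within $2n+o(n)$ bits, and symmetrically for the $2d$-Max heap. Concretely, given the two query endpoints one locates, via $\rank{}$, $\select{}$, $\findopen{}$, $\findclose{}$ and suitable lookup tables on the parenthesis sequence, the set of children of the LCA-node whose subtrees intersect the query range, and then picks the leftmost/rightmost/$k$th among these. Next I would lift these extended navigation operations to the \emph{colored} $2d$-Min (resp.\ Max) heap of Fischer, where each node carries an extra color bit encoding the equality relation with its parent; this is the step that handles arrays with consecutive equal elements and that brings in the $2.54n$ term from Fischer's encoding.

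For parts (c) and (d) I would then merge the two extended colored heaps using the Min/Max encoding of Gawrychowski and Nicholson, modified to carry the additional color information and the extra range-navigation data. Naively the two heaps cost about $2 \cdot 2.54n = 5.08n$ bits; the merging saves space by observing that the $2d$-Min and $2d$-Max trees on the same array are strongly correlated, so at each position one can record a symbol from a joint local alphabet rather than from the product of the two alphabets. The resulting per-position alphabets should have sizes roughly $17$ for (c) (encoded by arithmetic coding in $\lceil \log_2 17 \rceil \cdot n / \lg n$-sized blocks, giving $\log_2 17 \cdot n \approx 4.088n$ bits) and roughly $24$ for (d), giving $\log_2 24 \cdot n \approx 4.585n$ bits; the constant-time version (d) is larger because it keeps enough structural redundancy to allow $O(1)$ decoding via $o(n)$-bit directories, while (c) is allowed to pay $O(\mathrm{poly}\log n)$ per symbol access.

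For (a) and (b) I would exploit that the required queries $\RMinQ{}_A$, $\RRMinQ{}_A$, $\RMaxQ{}_A$, $\RRMaxQ{}_A$, $\PSV{}_A$, $\PLV{}_A$ only require one-sided (rightward) navigation in the two heaps, so the joint local alphabet shrinks to about $9$ symbols (giving $2\log_2 3 \cdot n \approx 3.17n$) and, with the enrichment needed for constant time, to about $10$ symbols (giving $\log_2 10 \cdot n \approx 3.322n$). The final ``no two consecutive equals'' statement then falls out by dropping the color component, which collapses the bounds to $3n+o(n)$ and $4n+o(n)$ as in Gawrychowski and Nicholson's original construction.

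The main obstacle I anticipate is the third step: precisely determining the joint local alphabets and proving that the constants $9$, $10$, $17$, $24$ are tight for each query set, and simultaneously checking that the combined encoding still admits constant-time $\findopen{}$/$\findclose{}$-style navigation on both the merged ``min-side'' and ``max-side'' parenthesis sequences. The rest (extending DFUDS to $\RkMinQ{}_A$ and friends, and handling coloring) is essentially bookkeeping over well-established machinery.
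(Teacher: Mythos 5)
Your skeleton (extend DFUDS, then merge the Min and Max sides via Gawrychowski--Nicholson) matches the paper's, and you have correctly reverse-engineered the constants as $\lg 9$, $\lg 10$, $\lg 17$, $\lg 24$. But the step you yourself flag as the ``main obstacle'' --- justifying these joint local alphabets --- is precisely the content of the proof, and the mechanism you propose for it is not the one that works. The paper does not compress a per-position product alphabet of the two trees; the $3n$ (resp.\ $4n$) part already comes from Gawrychowski--Nicholson's observation that each push pops from exactly one stack, and the fractional constants come entirely from handling the $k$ positions with $A[i]=A[i-1]$ (where neither stack pops, so the shared close parenthesis cannot be attributed). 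The paper stores a bit string $C$ marking these positions in $\lg\binom{n}{k}+o(n)$ bits, builds the GN encoding on the deduplicated array $A'$, and bounds the total via the inequality $\lg\binom{n}{k}+a(n-k)\le n\lg(2^a+1)$ (its Lemma~\ref{lemma:encoding}) with $a=3,2,4,\ldots$; e.g.\ (c) is $4(n-k)+\lg\binom{n}{k}\le n\lg 17$ and (d) is $3n+(n-k)+\lg\binom{n}{k}\le n\lg 24$. Without this (or an equivalent) accounting, your claimed bounds are unproven. The distinction between (a)/(c) and (b)/(d) is also not ``polylog vs.\ constant access'': in (a)/(c) one stores the shorter string $T$ built on $A'$ and must rescan the whole sequence to reconstruct $D_{\CMinA{}}$, whereas (b)/(d) stores a longer string $T'$ plus $o(n)$-bit directories for local decoding. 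And the gap between (a)/(b) and (c)/(d) is the presence of the color vectors $V_{min},V_{max}$ (needed for \RLMinQ{}, \RkMinQ{}, \NSV{}, \NLV{}), not ``one-sided navigation''; in particular, for $k=0$ the colors are \emph{not} dropped in (c)/(d) --- non-adjacent equal elements still exist --- and the $4n$ bound is $3n$ for the merged trees plus $n$ for the concatenated color vectors.

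Two further points would derail your stages 1--2 as written. First, the uncolored $2n$-bit DFUDS cannot support $\RLMinQ{}_A$ or $\RkMinQ{}_A$ when equal elements occur: you must know which siblings of the Fischer--Heun answer node carry the same value, and that is exactly what Fischer's color encodes (whether a node's value is strictly smaller than its \emph{left sibling's}, not ``equality with its parent''). Second, merging two copies of Fischer's $2.54n$-bit TC-encoding is a dead end, because the TC-encoding entangles tree shape and colors. The paper's central structural move --- which you dismiss as bookkeeping --- is to \emph{replace} the TC-encoding by a deliberately less compact representation, $D_{\CMinA{}}$ plus a separate color bit string $V_{min}$ ($3n+o(n)$ bits per tree, Lemma~\ref{lemma:extenddfuds}), precisely so that the two tree parts can be merged by the GN scheme and the two color parts simply concatenated. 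Starting from $5.08n$ and hunting for a joint alphabet of the two TC-encodings would not lead to the stated bounds.
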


This paper organized as follows. Section~\ref{sec:prelim} introduces various data structures 
that we use later in our encodings. In Section~\ref{sec:exdfuds}, 
we describe the encoding of colored $2d$-Min heap 
by extending the DFUDS of $2d$-Min heap. This encoding uses a distinct approach from the encoding of the colored $2d$-Min heap by Fischer~\cite{Fischer11}.
Finally, in Section~\ref{sec:minmax}, we combine the encoding of this colored $2d$-Min heap and 
Gawrychowski and Nicholson's Min-Max encoding~\cite{Gawry14} with some modifications,
to obtain our main result (Theorem~\ref{theorem:minmax}).

\section{Preliminaries}
\label{sec:prelim}

We first introduce some useful data structures that we use to encode various bit vectors and balanced parenthesis sequences.

%in Lemma \ref{lemma:operation-dfuds} using the original DFUDS of $MIn(A)$.

\paragraph{\textbf{Bit strings (parenthesis sequences)}}
Given a string $S[1 \dots n]$ over the alphabet $\Sigma = \{(,  )\}$,
$\rank{}_{S}(i, x)$ returns the number of occurrence of the pattern $x \in \Sigma^*$ in $S[1 \dots i]$
and $\select{}_{S}(i, x)$ returns the first position of $i$-th occurrence of $x \in \Sigma^*$ in $S$.
Combining the results from \cite{MRR01} and \cite{rrr-talg07}, one can show the following.
%that if $|x| \leq 1/2\lg{n}$, we can encode $S$ while supporting $\rank{}_S$ and $\select{}_S$ in constant time.
%\begin{lemma}[\cite{rrr-talg07}]
%\label{lemma:rankselect}
%Suppose there is a string $S$ of length $n$ over an alphabet set $\Sigma = \{(,  )\}$.
%If the number of $)$ in $S$ is $m \leq n$ and $|x|=1$, we can support both $\rank{}_{S}(x, i)$ and $\select{}_{S}(x, i)$ in constant time using $\lg{{n \choose m}}+o(n)$ bits of space. 
%Also we can decode any $\lg{n}$ consecutive bits in $S$ in constant time from this structure.
%\end{lemma}
%
%Although $x$ is not a single alphabet, if we can take any $\lg{n}$ consecutive sequence of $S$ in constant time and $|x| \leq 1/2\lg{n}$, we can still support $\rank{}_S$ and $\select{}_S$ in constant time with $o(n)$ additional space~\cite{MRR01}.
% 
%%if the rrr can recover the lgn consecutive original bits in O(1) time the below lemma holds (need to check?)
%%check the condition whether <1/2lgn or <=1/2lgn
%By combining these two results, we can derive the following lemma.

\begin{lemma}[\cite{MRR01}, \cite{rrr-talg07}]
\label{lemma:rankselect2}
Let $S$ be a string of length $n$ over the 
alphabet $\Sigma = \{'(',  ')'\}$ containing $m$ closing parentheses.
%If the number of $)$ in $S$ is $m \leq n$ and 
One can encode $S$ using $\lg{{n \choose m}}+o(n)$ bits to support both $\rank{}_{S}(i, x)$ 
and $\select{}_{S}(i, x)$ in constant time, for any pattern $x$ with length $|x| \leq 1/2\lg{n}$. 
Also, we can decode any $\lg{n}$ consecutive bits in $S$ in constant time.
% from this structure.
\end{lemma}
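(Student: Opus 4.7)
The plan is to combine the compressed bit-vector representation of Raman, Raman, and Rao~\cite{rrr-talg07} for the space bound with Munro--Raman--Rao~\cite{MRR01} style precomputed tables to upgrade bit-level $\rank{}$ and $\select{}$ to the pattern queries required. First I would identify $S$ with a binary string by mapping $( \mapsto 0$ and $) \mapsto 1$, so the bit string has exactly $m$ ones; the goal then reduces to representing this bit string in $\lg\binom{n}{m} + o(n)$ bits while supporting, in constant time, $\rank{}$ and $\select{}$ for any pattern of length at most $b := \tfrac{1}{2}\lg n$.

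For the space, I would partition the bit string into blocks of length $b$ and encode each block as a pair (class, index), where class is the Hamming weight and index is the lexicographic rank among strings of that weight. The total length of the index fields is $\sum_j \lceil \lg\binom{b}{c_j}\rceil \le \lg\binom{n}{m} + n/b$, which is $\lg\binom{n}{m} + o(n)$. The class fields take $O(\lg\lg n)$ bits per block, summing to $o(n)$, and the standard two-level pointer structure that locates each variable-length block adds another $o(n)$ bits. A universal table of size $O(2^b \cdot b) = O(\sqrt{n}\lg n)$ inverts any (class, index) pair to its $b$-bit block in $O(1)$; since any $\lg n$ consecutive bits span $O(1)$ blocks, this immediately yields the block-decoding claim.

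For $\rank{}_S(i,x)$ I would use the three-tier scheme of superblocks of size $\lg^2 n$, blocks of size $b$, and an intra-block table. There are only $2^{b+1} = O(\sqrt n)$ patterns of length $\le b$, so storing, at every superblock boundary and for every such pattern, the count of occurrences ending before that boundary fits in $O((n/\lg^2 n)\cdot \sqrt n \cdot \lg n) = o(n)$ bits. Within a superblock, the block-local contribution to $\rank{}_S(i,x)$ is read from a table keyed by (class, index, pattern, position), whose total size remains $o(n)$ for this choice of $b$; the at most one cross-boundary occurrence of $x$ spanning two consecutive blocks is handled by an auxiliary table indexed by those two blocks' (class, index) pairs. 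Select is built symmetrically: store the position of every $(\lg^2 n)$-th occurrence of each pattern, and finish inside the resulting bucket by a short search using the rank structure together with the decoded block contents. The main obstacle will be calibrating $b$ so that both the per-pattern superblock summaries and the joint block/pattern table remain $o(n)$ simultaneously; the boundary-crossing case for long patterns is delicate but cleanly handled by the two-block auxiliary table, exploiting that any pattern of length $\le b$ can straddle at most one block boundary.
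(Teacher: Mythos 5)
The paper gives no proof of this lemma --- it is quoted as a combination of the cited results of Munro--Raman--Rao and Raman--Raman--Satti --- so your proposal is judged on its own. The first half (the RRR block/class/index encoding, the $\lg\binom{n}{m}+n/b+o(n)$ space accounting, and decoding $\lg n$ consecutive bits via $O(1)$ lookups in a universal table of size $O(\sqrt{n}\lg n)$) is correct and is exactly how the cited structures work. The problem is in the pattern $\rank{}/\select{}$ part: several of your space bounds that you assert to be $o(n)$ are in fact polynomially superlinear. Storing a count for every pattern at every superblock boundary costs $(n/\lg^2 n)\cdot\Theta(\sqrt{n})\cdot\lg n=\Theta(n^{3/2}/\lg n)$ bits, not $o(n)$. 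The table keyed by (class, index, pattern, position) has $2^{b}\cdot 2^{b+1}\cdot b=\Theta(n\lg n)$ entries, so it costs $\Omega(n\lg n)$ bits; the two-block boundary table keyed by two blocks and a pattern has $\Theta(n^{3/2})$ entries; and the per-pattern sampled-select positions again cost $\Theta(n^{3/2}/\lg n)$ bits. No choice of $b$ rescues this: to index a table by both a block's contents and a pattern of length up to $\tfrac{1}{2}\lg n$ you already pay a factor $2^{\Theta(\lg n)}$ in the number of table entries, so the ``calibration'' you defer to is not achievable. In short, your construction tries to support all $\Theta(\sqrt{n})$ patterns simultaneously within a single $o(n)$-bit redundancy, and the arithmetic does not close.

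The intended (and standard) reading of the lemma, consistent with \cite{MRR01}, is that the $o(n)$-bit auxiliary directory is built \emph{per given pattern}: for a fixed $x$ with $|x|\le\tfrac{1}{2}\lg n$, superblock counts cost $O(n/\lg n)$ bits, block-relative counts cost $O(n\lg\lg n/\lg n)$ bits, and the intra-block lookup table (now keyed only by block contents and position, since $x$ is fixed) costs $O(\sqrt{n}\,\mathrm{polylog}(n))$ bits; the sampled-select directory for the fixed $x$ is likewise $o(n)$. This suffices for the paper, which only ever queries a constant number of patterns such as $`)'$ and $`(('$, so the total auxiliary space stays $o(n)$. If you restate your argument with the pattern fixed (and keep your handling of occurrences straddling a block boundary, which is fine once the table is no longer indexed by the pattern), the proof goes through.
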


%\paragraph{\textbf{\findopen{}/\findclose{} operations}} 
\paragraph{\textbf{Balanced parenthesis sequences}} 
Given a string $S[1 \dots n]$ over the alphabet $\Sigma = \{'(',  ')'\}$,
if $S$ is balanced  
%(i.e.  In any prefix of $S$, the number of $')'$ is no more than the number of $'('$ ) 
and $S[i]$ is an open (close) parenthesis, 
then we can define $\findopen{}_{S}(i)$ ($\findclose{}_{S}(i)$) 
which returns the position of the matching close (open) parenthesis to $S[i]$. 
Now we introduce the lemma from Munro and Raman~\cite{mr-sjc01}.
%Munro and Raman~\cite{mr-sjc01} showed that if we can access any $\lg{n}$-bit 
%subsequence of $S$ in constant time, then both $ \findopen{}_S$ and $\findclose{}_S$ 
%can be answered in constant time using an additional $o(n)$ bits.
\begin{lemma}[\cite{mr-sjc01}]
\label{lemma:openclose}
Let $S$ be a balanced parenthesis sequence of length $n$.
If one can access any $\lg{n}$-bit subsequence of $S$ in constant time, 
Then both $\findopen{}_{S}(i)$ and $\findclose{}_{S}(i)$ can be supported in constant time
with $o(n)$-bit additional space.
\end{lemma}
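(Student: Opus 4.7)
The plan is to follow the classical pioneer-family approach of Jacobson, as refined by Munro and Raman, which reduces the global matching problem to (i) a small set of ``hard'' parentheses whose matches are stored explicitly and (ii) ``easy'' parentheses whose matches can be read off from a precomputed table on short subsequences of $S$. Since by hypothesis we can fetch any $\lg n$-bit window of $S$ in constant time, the lookup-table step will be immediate; the whole work goes into bounding the number of hard parentheses.

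First I would partition $S$ into blocks of $b=\frac{1}{2}\lg n$ consecutive parentheses. A position $i$ is called \emph{far} if its match $\findclose{}_S(i)$ (or $\findopen{}_S(i)$) lies outside the block of $i$, and \emph{near} otherwise. For near positions, I would precompute, for every possible block content $w\in\{(,)\}^b$ and every offset, the relative position of the match within the same block; this table has $2^b\cdot b\cdot O(\lg b)=O(\sqrt n\,\lg n\,\lg\lg n)=o(n)$ bits, and a query reduces to reading the block containing $i$ in constant time (using the hypothesis) and doing one table lookup.

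Next, for far positions I would introduce the pioneer family. An open parenthesis $i$ in block $B$ is a \emph{pioneer} if its matching close parenthesis lies in a block $B'$ that contains no other match of a far open parenthesis of $B$; pioneer close parentheses are defined symmetrically. A standard argument shows that the number of pioneers is $O(n/b)=O(n/\lg n)$, because the edges of the matching restricted to pioneers form an outerplanar graph on the blocks, which has at most $2\cdot(n/b)-3$ edges. I would then store the matches of pioneers explicitly in two bit-vectors of length $n$ that mark pioneer positions, plus an array of $O(n/\lg n)$ pointers of $O(\lg n)$ bits each, giving $O(n/\lg n)\cdot O(\lg n)=O(n)$\,---\,here I have to be careful: actually the $o(n)$ bound is obtained because the pioneer bit-vectors with $O(n/\lg n)$ marked bits can be stored using Lemma~\ref{lemma:rankselect2} in $O((n/\lg n)\lg\lg n)=o(n)$ bits while still supporting rank/select, and the explicit pointer table only has $O(n/\lg n)$ entries of $O(\lg n)$ bits, totalling $O(n)$; to push this down to $o(n)$ one recurses the whole construction once on the pioneer sequence (of effective length $O(n/\lg n)$), which makes the explicit table shrink to $o(n)$ bits.

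Finally, to answer $\findclose{}_S(i)$ for a far open parenthesis $i$, I would locate the rightmost pioneer $p\le i$ in the same block of $S$ using rank/select on the pioneer bit-vector (constant time by Lemma~\ref{lemma:rankselect2}), retrieve $p'=\findclose{}_S(p)$ from the stored pioneer table, then finish inside the block of $p'$ by one table lookup on the $b$-bit window around $p'$ using the excess/depth difference between $i$ and $p$ (which is again readable in constant time from a $\lg n$-bit window); $\findopen{}_S$ is handled symmetrically. The main obstacle, as indicated, is the accounting that drives the pioneer count below $2n/b$ and the one-level recursion needed to ensure that the explicit pointer table shrinks from $O(n)$ to $o(n)$; once the pioneer bound and the lookup-table access via the hypothesis are in place, correctness and constant query time follow directly.
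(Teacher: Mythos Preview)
The paper does not give its own proof of this lemma; it is simply quoted from Munro and Raman~\cite{mr-sjc01}. Your sketch is precisely the classical pioneer-family construction from that reference, together with the one-level recursion on the pioneer sequence (the refinement usually attributed to Geary, Raman and Raman) that drives the explicit pointer table from $O(n)$ down to $o(n)$, so in substance you are reproducing the intended argument.

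One small imprecision worth fixing: your definition of a pioneer is not quite right. You say $i$ is a pioneer if its match lies in a block $B'$ that ``contains no other match of a far open parenthesis of $B$''; read literally, this means that if several far open parentheses of $B$ all match into the same block $B'$, \emph{none} of them is a pioneer, and then your query step (``locate the rightmost pioneer $p\le i$ in the same block'') can fail to find a pioneer whose match lands in the correct target block. The standard definition is that a far open parenthesis is a pioneer if the immediately preceding far open parenthesis in the same block (if any) matches into a different block---equivalently, for each pair $(B,B')$ one keeps the leftmost far parenthesis of $B$ matching into $B'$. With that correction the outerplanar count $O(n/b)$ and the rest of your argument go through exactly as you wrote them.
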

%\paragraph{\textbf{$\pm1$ Range minimum query}}
%Given the string $S[1 \dots n]$ over an alphabet set $\Sigma = \{'(',  ')'\}$,
%We can define the new array $E[1 \dots n]$ such that $E[i]$ is the number of difference between $'('$ and $')'$ in $S[1 \dots i]$.
%For all $1 < i \leq n$, $E[i]-E[i-1]=\pm 1$ and  
%we call $\RMinQ{}_E$ as $\pm1$ Range minimum query ($\pm1  \RMinQ{}$) if $E$ satisfies this property.
%(Usually the elements in $E$ is not distinct. In this case, we return the $\pm \RLMinQ{}_E$.)
%Sadakane and Navarro proposed $o(n)$-bits encoding for $\pm \RMinQ{}_E$ which supports constant query time~\cite{sn-soda10}.
\paragraph{\textbf{Depth first unary degree sequence}}

\textit{Depth first unary degree sequence (DFUDS)} is one of the well-known methods for representing ordinal trees~\cite{BDMRRR05}. 
It consists of a balanced sequence of open and closed parentheses, which can be defined inductively as follows. 
If the tree consists of the single node, its DFUDS is `$()$'. 
%We can extend it recursively such that 
Otherwise, if the ordinal tree $T$ has $k$ subtrees $T_1 \dots T_k$, then its DFUDS, $D_T$ is 
the sequence $(^{k+1}~) d_{T_1} \dots d_{T_k}$ (i.e., $k+1$ open parentheses followed by a close parenthesis concatenated with the `partial' DFUDS sequences $d_{T_1} \dots d_{T_k}$) where $d_{T_i}$, for $1 \le i \le k$, is the DFUDS of the subtree $T_i$ 
(i.e., $D_{T_i}$) with the first open parenthesis removed.
From the above construction, it is easy to prove by induction that if $T$ has $n$ nodes, then the size of $D_T$ is $2n$ bits. 
%And DFUDS is balanced sequence which means for any $1 \leq i \leq 2n$, 
%there is no more $')'$ than $'('$ in the substring $D_T[1 \dots i]$ of $D_T$.
%
%The following lemma says that we can support the various operations on $T$ in constant time 
%using $D_T$ with additional $o(n)$ bits for the auxiliary structures.
The following lemma shows that DFUDS representation can be used to support various navigational operations on the tree efficiently.
\begin{lemma}[\cite{ArroyueloCNS10}, \cite{BDMRRR05}, \cite{JSS07}] % \cite{grr-atalg06}
\label{lemma:operation-dfuds}
Given an ordinal tree $T$ on $n$ nodes with DFUDS sequence $D_T$, one can construct
an auxiliary structure of size $o(n)$ bits to support the following operations in constant time: 
for any two nodes $x$ and $y$ in $T$, 

\noindent\emph{-} $parent_{T}(x)$ : Label of the parent node of node $x$. \\
\noindent\emph{-} $degree_{T}(x)$ : Degree of node $x$. \\
\noindent\emph{-} $depth_{T}(x)$ : Depth of node $x$ (The depth of the root node is 0). \\
\noindent\emph{-} $subtree\_size_{T}(x)$ : Size of the subtree of $T$ which has the $x$ as the root node. \\
\noindent\emph{-} $next\_sibling_{T}(x)$ : The label of the next sibling of the node $x$. \\
\noindent\emph{-} $child_{T}(x, i)$ :  Label of the $i$-th child of the node $x$. \\
\noindent\emph{-} $child\_rank_{T}(x)$ :  Number of siblings left to the node $x$. \\
\noindent\emph{-} $la_{T}(x, i)$ : Label of the level ancestor of node $x$ at depth $i$. \\
\noindent\emph{-} $lca_{T}(x, y)$ : Label of the least common ancestor of node $x$ and $y$.\\ 
\noindent\emph{-} $pre\_rank_{T}(i)$ : The preorder rank of the node in $T$ corresponding to $D_{T}[i]$.  \\
\noindent\emph{-} $pre\_select_T(x)$ :  The first position of node with preorder rank $x$ in $D_T$.

%In addition to these operations, we define $pre\_rank_{CMin(A)}(x) = \rank{}_{D_{CMin(A}}(x-1, `)`)$ 
%and $pre\_select_{CMin(A)}(i) = \select{}_{D_{CMin(A)}}(i,  `)' )+1)$ on $CMin(A)$
%which answer the label of the node in $Min(A)$ corresponds to $D_{CMin(A}[x]$ 
%and the first position of node $i$ in $D_{CMin(A)}$ respectively.
%These operations can be answered in constant time by Lemma \ref{lemma:rankselect}.
\end{lemma}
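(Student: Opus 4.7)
The plan is to reduce each listed operation to a constant number of invocations of $\rank$, $\select$, $\findopen$, and $\findclose$ on $D_T$, combined with a small number of lookups in $o(n)$-bit auxiliary structures layered on top; constant time and sublinear extra space will then follow from Lemmas \ref{lemma:rankselect2} and \ref{lemma:openclose}.

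I would begin by setting up the standard correspondence between nodes and parenthesis blocks in $D_T$. By the inductive construction of DFUDS, the $k$-th node in preorder contributes a maximal run of open parentheses followed by exactly one close parenthesis, and these close parentheses appear in $D_T$ in the preorder of the contributing nodes. Identifying a node with the starting position of its block, $pre\_select_T(x)$ is one plus the position of the $(x-1)$-st close parenthesis of $D_T$, and $pre\_rank_T(i)$ is one plus the number of close parentheses in $D_T[1..i-1]$; both are computed in $O(1)$ by Lemma \ref{lemma:rankselect2}. The $degree_T(x)$ of a node is then the length of the open run in its block, read off in $O(1)$ from two select calls on close parentheses. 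Local navigation follows directly: the $i$-th child of $x$ corresponds to the $i$-th open parenthesis of $x$'s block, and a $\findclose$ on that position, followed by a rank, locates the start of the child's block. In this way $child_T(x,i)$, $child\_rank_T(x)$, $next\_sibling_T(x)$, and $parent_T(x)$ each reduce to $O(1)$ rank/select and findopen/findclose operations, and $subtree\_size_T(x)$ reduces to locating the close parenthesis that bounds $x$'s subtree via $\findclose$ and counting close parentheses in the enclosed range.

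The three remaining operations $depth_T$, $la_T$, and $lca_T$ require additional $o(n)$-bit auxiliary structures and constitute the main technical content. For $lca_T(x,y)$, I would invoke the characterization from \cite{BDMRRR05}: the lowest common ancestor corresponds to the node whose block contains the position of minimum excess of $D_T$ between the blocks of $x$ and $y$, resolved by a standard $o(n)$-bit range-minimum-on-excess structure. For $depth_T$ and $la_T$, I would adapt the level-ancestor construction of \cite{JSS07}, which samples ancestor pointers at geometric depths and resolves queries in constant time by combining the sampled pointers with the local navigation primitives developed above; $depth_T$ is then read off from the sampling, and the residue within the final sampled block is handled by decoding $\lg{n}$ consecutive bits of $D_T$ using Lemma \ref{lemma:rankselect2}. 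The main obstacle is precisely $la_T$ and $lca_T$: the other operations are essentially mechanical consequences of the block structure of DFUDS together with the four parenthesis primitives, whereas level ancestor and LCA genuinely require the sublinear-space sampling and excess-based range-minimum machinery, and one must verify that these auxiliary structures combine without exceeding the $o(n)$-bit budget.
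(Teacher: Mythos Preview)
The paper does not give its own proof of this lemma; it is stated in the preliminaries with citations to \cite{ArroyueloCNS10,BDMRRR05,JSS07} and is used as a black box thereafter. So there is no in-paper argument to compare against.

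Your sketch is a reasonable high-level account of the standard DFUDS machinery and would be acceptable as an informal justification. The reductions you give for $pre\_rank$, $pre\_select$, $degree$, $child$, $child\_rank$, $next\_sibling$, $parent$, and $subtree\_size$ to $\rank$/$\select$/$\findopen$/$\findclose$ are essentially the ones in \cite{BDMRRR05}, and your treatment of $lca$ via a range-minimum on the excess sequence matches \cite{JSS07}. The one place where your description drifts from the cited sources is $depth_T$ and $la_T$: in DFUDS the depth of a node is \emph{not} recoverable from local excess, and \cite{JSS07} does not handle level ancestor by ``sampling ancestor pointers at geometric depths'' in the ladder-decomposition style you suggest. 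Their solution instead introduces additional primitives on the excess sequence (forward/backward search for a given excess value) supported by a range min--max tree, from which $depth$ and $la$ follow directly. This is not a gap in correctness so much as a mis-attribution of technique; since the paper only cites the result, it does not affect anything downstream.
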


We use the following lemma  
%in Appendix~\ref{app:lemma_encode}) 
to bound the space usage of the
data structures described in Section~\ref{sec:minmax}.

\begin{lemma}
\label{lemma:encoding}
Given two positive integers $a$ and $n$, and a nonnegative integer $k \le n$,
$\lg{n \choose k} + a(n-k) \le n\lg{(2^a+1)}$.
\end{lemma}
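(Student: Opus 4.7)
My plan is to exponentiate both sides with base $2$ and reduce the inequality to a statement that follows from the binomial theorem. Specifically, raising $2$ to the power of each side, the claim becomes
\[
\binom{n}{k} \cdot 2^{a(n-k)} \;\le\; (2^a + 1)^n.
\]

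The key observation is that, by the binomial theorem,
\[
(2^a + 1)^n \;=\; \sum_{j=0}^{n} \binom{n}{j} (2^a)^j \;=\; \sum_{j=0}^{n} \binom{n}{j} 2^{aj}.
\]
The term corresponding to $j = n-k$ in this sum is $\binom{n}{n-k} 2^{a(n-k)}$, which equals $\binom{n}{k} 2^{a(n-k)}$ by the symmetry $\binom{n}{n-k} = \binom{n}{k}$. Since every summand on the right-hand side is nonnegative, this single term is bounded above by the whole sum, which yields the desired inequality. Taking $\lg$ of both sides then gives the claim. There is essentially no obstacle: the lemma is a one-line application of the binomial theorem, and the only thing to notice is which term of the expansion to isolate.
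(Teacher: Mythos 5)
Your proof is correct, and it is genuinely simpler than the one in the paper. The paper proves the exponentiated inequality $\binom{n}{k}2^{a(n-k)} \le (2^a+1)^n$ by a double induction on $n$ and $k$, splitting $\binom{n+1}{k}$ and $\binom{n}{k+1}$ via Pascal's rule and bounding the resulting pieces; amusingly, the paper's induction even invokes the expansion $(2^a+1)^{n-1}=\sum_{m=0}^{n-1}\binom{n-1}{m}2^{a(n-1-m)}$ as a side fact, which is exactly the observation you make the centerpiece of your argument. Your route --- identify $\binom{n}{k}2^{a(n-k)}$ as the $j=n-k$ term of the binomial expansion of $(2^a+1)^n$ and note that all other terms are nonnegative --- dispenses with the induction entirely and gives the bound in one step. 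Both arguments establish the same inequality with the same constants; yours is shorter, avoids the bookkeeping of the inductive cases, and makes transparent exactly where the slack in the inequality comes from (namely, all the other terms of the expansion).
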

\begin{proof}
By raising both sides to the power of $2$, it is enough to prove that
${n \choose k}2^{(a(n-k))} \le (2^a+1)^n$.
We prove the lemma by induction on $n$ and $k$.
In the base case, when $n=1$ and $k=0$, the claim holds since $2^{a} < (2^a+1)$.
Now suppose that ${n' \choose k'}2^{a(n'-k')} \le (2^a+1)^{n'}$ for all $0 < n' \le n$ and $0 \le k' \le k$. Then
\begin{flalign*}
%%\displaybreak[4]
{n+1 \choose k}2^{a(n+1-k)} & = \left({n \choose k} + {n \choose k-1}\right)2^{a(n+1-k)} 
% & \le 2^a{n \choose k}2^{a(n-k)}+{n \choose k-1}2^{a(n-(k-1))}$ \\
\le 2^a(2^a+1)^n + (2^a+1)^n \\ 
& =  (2^a+1)^{n+1}\mbox{  by induction hypothesis.}  &  
\end{flalign*}

\noindent Also by induction hypothesis,
\begin{flalign*}
& {n \choose k+1}2^{a(n-(k+1))} = \left({n-1 \choose k} + {n-1 \choose k+1}\right)2^{a(n-(k+1))} \le (2^a+1)^{n-1}\left(1+\frac{{n-1 \choose k+1}(2^{a(n-1-k)})}{(2^a+1)^{n-1}}\right)&
\end{flalign*}

\noindent Since ${n-1 \choose k+1}2^{a(n-1-k)} < 2^a(2^a+1)^{n-1} (\because (2^a+1)^{n-1} = \sum_{m=0}^{n-1}{n-1 \choose m}2^{a(n-1-m)})$,
\begin{flalign*}
&(2^a+1)^{n-1}\left(1+\frac{{n-1 \choose k+1}(2^{a(n-1-k)})}{(2^a+1)^{n-1}}\right) < (2^a+1)^{n-1}(1+2^a) = (2^a+1)^n.&
\end{flalign*}
%${n+1 \choose k}2^{a(n+1-k)} = ({n \choose k} + {n \choose k-1})2^{a(n+1-k)}
%\le 2^a{n \choose k}2^{a(n-k)}+{n \choose k-1}2^{a(n-(k-1))}$ 
%$ \le 2^a(2^a+1)^n + (2^a+1)^n = (2^a+1)^{n+1}$ by induction hypothesis. Also
%${n \choose k+1}2^{a(n-(k+1))} = ({n-1 \choose k} + {n-1 \choose k+1})2^{a(n-(k+1))} \le (2^a+1)^{n-1}(1+\frac{{n-1 \choose k+1}(2^{a(n-1-k)})}{(2^a+1)^{n-1}})$ by induction hypothesis. \\
%${n-1 \choose k+1}(2^{a(n-1-k)}) < 2^a(2^a+1)^{n-1} (\because (2^a+1)^{n-1} = \sum_{m=0}^{n-1}{n-1 \choose m}2^{a(n-1-m)})$,
%$(2^a+1)^{n-1}(1+\frac{{n-1 \choose k+1}(2^{a(n-1-k)})}{(2^a+1)^{n-1}}) < (2^a+1)^{n-1}(1+2^a) = (2^a+1)^n$.
Therefore the above inequality still holds when $n' = n+1$ or $k' =k+1$, which proves the lemma. 
\end{proof}

\subsection{$2d$-Min heap}
\label{subsec:2dminheap}
The $2d$-Min heap~\cite{fh-sjc11} on $A$, denoted by $\MinA{}$, is designed to 
encode the answers of $\RMinQ{}_A(i, j)$ efficiently. 
We can also define the $2d$-Max heap on $A$ ($\MaxA{}$) analogously.
$\MinA{}$ is an ordered labeled tree with $n+1$ nodes labeled with $0 \dots n$. 
Each node in $\MinA{}$ is labeled by its preorder rank and each label corresponds to a position in $A$.
We extend the array $A[1 \dots n]$ to $A[0 \dots n]$ with $A[0] = -\infty$. 
In the labeled tree, the node $x$ denotes the node labeled $x$.
For every vertex $i$, except for the root node, its parent node is (labeled with) $\PSV{}_A(i)$. 

%Fischer and Huen~\cite{fh-sjc11}showed the following important properties for 2d-Min heap.
%\begin{lemma}[\cite{fh-sjc11}]
%\label{lemma:2dheap}
%Let $Min(A)$ be the $2d$-Min heap on $A$, and let $x$ be the node in $Min(A)$ with children $x_1 \dots x_k$. Then\\
%\noindent\emph{(1)}  $A[x] > A[x_i]$ for all $1 \leq i \leq k$. \\
%\noindent\emph{(2)}   $A[x_{i}] \leq A[x_{i-1}]$ for all $1 < i \leq k$. \\
%\noindent\emph{(3)}  The subtree rooted at node $x$ consists of nodes corresponding to $A[x \dots h]$ for some $x \leq h \leq n$. 
%\end{lemma}

Using the operations in Lemma~\ref{lemma:operation-dfuds}, Fischer and Heun~\cite{fh-sjc11} showed that $\RMinQ{}_A(i, j)$ can be answered in constant time using $D_{\MinA{}}$. If the elements in $A$ are not distinct, $\RMinQ{}_A(i, j)$ returns the $\RRMinQ{}_A(i, j)$. 

Fischer and Heun~\cite{fh-sjc11} also proposed a linear-time stack-based algorithm to construct $D_{\MinA{}}$.
Their algorithm maintains a min-stack consisting of a decreasing sequence of elements from top to the bottom. 
The elements of $A$ are pushed into the min-stack from right to left and before pushing the element $A[i]$, 
all the elements from the stack that are larger than $A[i]$ are popped.
Starting with an empty string, the algorithm constructs a sequence $S$ as described below.
Whenever $k$ elements are popped from the stack and then an element is pushed into the stack, $(^k )$ is prepended to $S$.
Finally, after pushing $A[1]$ into the stack, if the stack contains $m$ elements, then $(^{m+1} )$ is prepended to $S$. 
One can show that this sequence $S$ is the same as the DFUDS sequence $D_{\MinA{}}$.
Analogously, one can construct $D_{\MaxA{}}$ using a similar stack-based algorithm.

\paragraph{\textbf{Colored 2d-Min heap}}
\begin{figure}[htbp]
\begin{center}
    \includegraphics[scale=0.99]{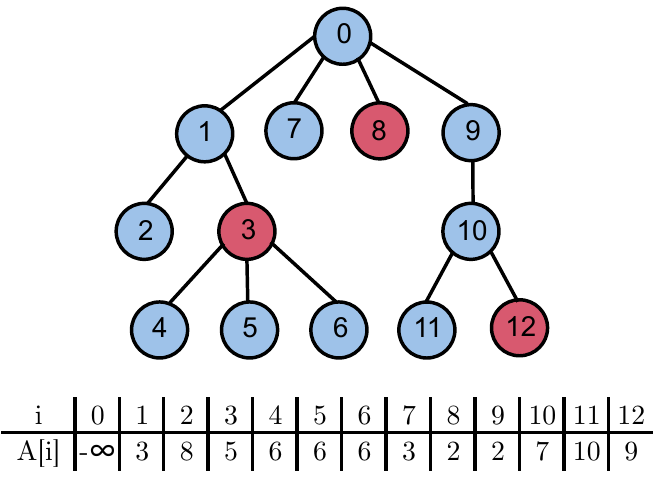}
    \caption{Colored $2d$-Min heap of $A$} \label{fig:color2d}
\end{center}
\end{figure}

From the definition of $2d$-Min heap, it is easy show that $\PSV{}_A(i)$, for $1 \leq i \leq n$, is the label corresponding to the parent of the node labeled $i$ in $\MinA{}$.
Thus, using the encoding of Lemma \ref{lemma:operation-dfuds} using $2n+o(n)$ bits, one can support the $\PSV{}_A(i)$ queries in constant time.
A straightforward way to support $\NSV{}_A(i)$ is to construct the $2d$-Min heap structure for the reverse of the array $A$, and encode it using 
an additional $2n+o(n)$ bits. 
Therefore one can encode all answers of $\PSV{}_A$ and $\NSV{}_A$ using $4n+o(n)$ bits with constant query time.
To reduce this size, Fischer proposed the \textit{colored $2d$-Min heap}~\cite{Fischer11}. 
This has the same structure as normal $2d$-Min heap, and in addition, the vertices are colored either red or blue. 
Suppose there is a parent node $x$ in the colored $2d$-Min heap with its children $x_1 \dots x_k$.
Then for $1 < i \leq k$, node $x_i$ is colored red if  $A[x_i] < A[x_{i-1}]$, and all the other nodes are colored blue (see Figure \ref{fig:color2d}).
%; and the node $x_1$ is colored red. 
We define the operation $\NRS{}(x_i)$ which returns the leftmost red sibling to the right (i.e., next red sibling) of $x_i$.

The following lemma can be used to support $\NSV{}_A(i)$ efficiently using the colored $2d$-Min heap representation.

\begin{lemma}[\cite{Fischer11}]
\label{lemma:c2dheap}
Let $\CMinA{}$ be the colored $2d$-Min heap on $A$. 
Suppose there is a parent node $x$ in $\CMinA{}$ with its children $x_1 \dots x_k$. Then for $1 \leq i \leq k$,\\

$\NSV{}_A(x_i) = \left\{\begin{array}{ll}
 \NRS{}(x_i)  & \textrm{\hspace{1cm} if $\NRS{}(x_i)$ exists,}\\
x_k+subtree\_size(x_k)  &\textrm{\hspace{1.1cm}otherwise.}\\
\end{array} \right.$
\end{lemma}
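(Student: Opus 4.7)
The plan is to analyze the two cases of the lemma separately, using two structural observations about the children $x_1, \dots, x_k$ of $x$ in $\CMinA{}$. The key preliminary observation I will establish is that the sibling values are non-increasing, i.e.\ $A[x_1] \geq A[x_2] \geq \dots \geq A[x_k]$. This follows directly from the definition of PSV: since $\PSV{}_A(x_j) = x$ for every $j$, no position strictly between $x$ and $x_j$ can have value smaller than $A[x_j]$, and in particular every earlier sibling $x_l$ with $l < j$ satisfies $A[x_l] \geq A[x_j]$. Combined with the coloring rule, this yields the clean reformulation: for $l > 1$, the sibling $x_l$ is blue iff $A[x_l] = A[x_{l-1}]$, and red iff $A[x_l] < A[x_{l-1}]$. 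The second observation I will use is that every proper descendant $p$ of $x_l$ satisfies $A[p] > A[x_l]$, which follows by induction along the parent chain in $\MinA{}$ (each step being a strict PSV decrease).

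For the case where $\NRS{}(x_i)$ exists, call it $x_j$. Then $x_{i+1}, \dots, x_{j-1}$ are all blue, so by the first observation they all have value exactly $A[x_i]$, while $A[x_j] < A[x_{j-1}] = A[x_i]$. The positions in the interval $(x_i, x_j)$ decompose into the siblings $x_{i+1}, \dots, x_{j-1}$ (each with value equal to $A[x_i]$) together with the proper descendants of $x_i, x_{i+1}, \dots, x_{j-1}$ (all with value strictly greater than $A[x_i]$). Hence no position in $(x_i, x_j)$ has value less than $A[x_i]$, while $x_j$ itself does; therefore $\NSV{}_A(x_i) = x_j = \NRS{}(x_i)$.

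For the case where $\NRS{}(x_i)$ does not exist, all of $x_{i+1}, \dots, x_k$ are blue, so they have value equal to $A[x_i]$. Let $y = x_k + subtree\_size(x_k)$. An identical decomposition argument applied to $(x_i, y)$, which splits into the subtrees rooted at $x_i, x_{i+1}, \dots, x_k$, shows that no position in this interval has value less than $A[x_i]$, so $\NSV{}_A(x_i) \geq y$. What remains is to show $A[y] < A[x_i]$. If $y = n+1$, this is immediate from the convention $A[n+1] = -\infty$. Otherwise, I will rule out $A[y] > A[x_i]$ by noting this would force $\PSV{}_A(y) = x_k$ and hence place $y$ inside the subtree of $x_k$, contradicting $y = x_k + subtree\_size(x_k)$; and I will rule out $A[y] = A[x_i]$ by noting this would force $\PSV{}_A(y) = x$, making $y$ a $(k{+}1)$-st child of $x$ and contradicting the maximality of $k$. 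Thus $A[y] < A[x_i]$ and $\NSV{}_A(x_i) = y$.

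The main obstacle I expect is obtaining the strict inequality $A[y] < A[x_i]$ in the second case: the interval-based argument by itself yields only $A[y] \leq A[x_i]$, and getting strictness requires leveraging both the fact that $x_k$ is the last child of $x$ and the strict-inequality convention in the definition of PSV (which causes the equality case to collapse onto $x$ rather than onto an ancestor). Dealing with this, together with the boundary case $y = n+1$ through the $A[n+1] = -\infty$ convention, is what makes the ``otherwise'' branch more subtle than the first.
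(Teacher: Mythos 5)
The paper states this lemma as a citation from Fischer~\cite{Fischer11} and supplies no proof of its own, so there is no in-paper argument to compare yours against; assessing your proposal on its own terms, it is correct and complete in outline. Your two structural observations --- that siblings of a common parent have non-increasing values (so, under the colouring rule, a non-first sibling is blue exactly when its value equals that of its left neighbour), and that every proper descendant strictly exceeds its subtree root --- are precisely what is needed. Combined with the preorder facts that the subtree of $x_l$ occupies the contiguous label range $[x_l, x_l+subtree\_size(x_l)-1]$ and that $x_{l+1}=x_l+subtree\_size(x_l)$, your decompositions of the intervals $(x_i,x_j)$ and $(x_i,y)$ go through, and your treatment of the ``otherwise'' branch (the $y=n+1$ boundary via $A[n+1]=-\infty$, and the separate elimination of $A[y]=A[x_i]$ using the strictness in the definition of \PSV{} and the fact that $x_k$ is the last child) correctly delivers the strict inequality $A[y]<A[x_i]$.

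One small imprecision: when ruling out $A[y]>A[x_i]$ you assert this ``would force $\PSV{}_A(y)=x_k$''. It need not: since $x_k<y$ and $A[x_k]<A[y]$, you only get $\PSV{}_A(y)\ge x_k$ --- the previous smaller value of $y$ could be a proper descendant of $x_k$ whose value still lies below $A[y]$. But then the parent of $y$ lies in $[x_k,y-1]$, which is exactly the preorder range of $x_k$'s subtree, so $y$ is a descendant of $x_k$ in either case, contradicting $y=x_k+subtree\_size(x_k)$ lying outside that range. The contradiction you want therefore survives; this is a one-line fix rather than a gap.
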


If all the elements in $A$ are distinct, 
%then all the nodes in the colored $2d$-Min heap are colored red, and hence 
%$\NRS{}(x_i)=next\_sibling(x_i)$ if $next\_sibling(x_i)$ exists. 
%So by Lemma \ref{lemma:operation-dfuds}, 
then a $2n+o(n)$-bit encoding of $\MinA{}$ is 
enough to support $\RMinQ{}_A$, $\PSV{}_{A}$ and $\NSV{}_{A}$ with constant query time.
In the general case, Fischer proposed an optimal $2.54n+o(n)$-bit encoding of colored $2d$-Min heap on $A$ using 
TC-encoding~\cite{fm-algo12}.
%This representation supports $\PSV{}_{A}$ and $\NSV{}_{A}$ queries in constant time~\cite{Fischer11}. 
%Fischer~\cite{Fischer11} showed that 
%$\RMinQ{}_A$, $\RRMinQ{}_A$, $\PSV{}_A$, $\NSV{}_A$ can be answered in constant time using 
%the TC-encoding of colored $2d$-Min.
This encoding also supports two additional operations, namely {\em modified} $child_{\CMinA{}}(x, i)$ and $child\_rank_{\CMinA{}}(x)$,
which answer the $i$-th red child of node $x$ and the number of red siblings to the left of node $x$, respectively, in constant time.
Using these operations, one can also support $\RLMinQ{}_A$ and $\RkMinQ{}_A$ in constant time.

\subsection{Encoding range min-max queries}
\label{subsec:minmax2}
\begin{figure}[htbp]
%%\hspace*{-0.5cm}
\begin{center}
    \includegraphics[scale=0.85]{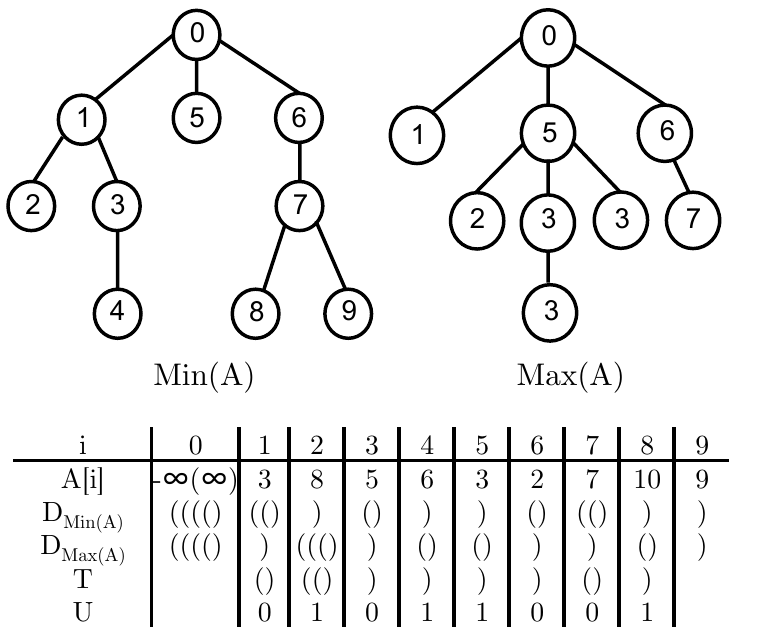}
    \caption{Encoding of $2d$-Min heap and $2d$-Max heap of $A$} \label{fig:minmax}
\end{center}
\end{figure}

One can support both $\RMinQ{}_A$ and $\RMaxQ{}_A$ in constant time 
by encoding both $\MinA{}$ and $\MaxA{}$ separately using $4n+o(n)$ bits.
Gawrychowski and Nicholson~\cite{Gawry14} described an alternate encoding that uses only $3n+o(n)$ 
bits while still supporting the queries in $O(1)$ time.
There are two key observations which are used in obtaining this structure:
\begin{enumerate}
\item If we can access any $\lg{n}$-bit substring of $D_{\MinA{}}$ and $D_{\MaxA{}}$ on $O(1)$ time,
we can still support both queries in $O(1)$ time, using an additional $o(n)$ bits; 
\item 
To generate $D_{\MinA{}}$ and $D_{\MaxA{}}$ using Fischer and Heun's stack-based algorithm,
in each step we push an element into both the min-stack and max-stack, and pop a certain number of elements
from exactly one of the stacks (assuming that $A[i] \neq A[i+1]$, for all $i$, where $1 \le i < n$). 
%In particular, when $A[i] = A[i+1]$, then after pushing $A[i]$ into both the 
%stacks, we do not pop from either stack and move to the next step of the algorithm.
%whenever we push $A[i], 1 \leq i < n$ in the reverse order, we need 
%to pop elements from exactly one of the min or max-stack, unless $A[i] = A[i+1]$ 
%(in which case, we do not pop elements from either stack).
\end{enumerate}

%\begin{algorithm}
%\caption{Construction algorithm of $T$ and $U$ }
%\label{TUalgorithm}
%Initialize $T$ to )
%Initialize $U$ to $\epsilon$
%\For{i \assign n to 1}
%\\Broadcast HELLO message to its neighbor
%\\let $k$ \Pisymbol{psy}{206} $N1$ ($i$) U {$i$} be s.t
%\\QOS($k$) = max {QOS($j$) \textbar $j$ \Pisymbol{psy}{206} $N1$($i$)  U $i$}
%\\ MPRSet($i$) = $k$
%\EndFor
%\end{algorithm}

\begin{algorithm}[t]
\caption{Construction algorithm for $T$ and $U$}
\label{TUalgorithm}
\begin{algorithmic}[1]
%\Procedure{CH\textendash Election}{}
\State Initialize $T$ to `)', and $U$ to $\epsilon$.
\State Initialize Min-stack and Max-stack as empty stacks
%\State Initialize Max-stack as empty stack
%\For{i \assign n to 1}
\State Push $A[n]$ into Min-stack and Max-stack.
\For{$i  := n-1$ to $1$ }
\State counter = 0
\If {$A[i]  < A[i-1]$}
\State Push $A[i]$ into Max-stack
\While {((Min-stack is not empty) \& (Top of Min-stack  $> A[i]$))}
%\If {}
\State Pop Min-stack
\State counter = counter + 1
%\EndIf
\EndWhile
\State Push $A[i]$ into Min-stack
\State Prepend $(^{counter-1} ) $ to $T$ and $0$ to $U$
%\State Prepend 0 to $U$
%\EndIf
%\If{$A[i]  > A[i-1]$}
\Else ~ // $A[i]  > A[i-1]$
\State Push $A[i]$ into Min-stack
\While {((Max-stack is not empty) \& (Top of Max-stack  $< A[i]$))}
\State Pop Max-stack
\State counter = counter + 1
%\EndIf
\EndWhile
\State Push $A[i]$ into Max-stack
\State Prepend $(^{counter-1} ) $ to $T$ and $1$ to $U$
%\State Prepend 1 to $U$
\EndIf
\EndFor
%\EndProcedure
\end{algorithmic}
\end{algorithm}
%\begin{algorithm}
%\caption{Construction algorithm of $T$ and $U$ }
%$RMQ(i,j): i < j$ \\  
%\If{$NLRN(i)\leq j$}
%{
%return $RMQ(NLRN(i) , j)$
%}
%\Else
%{
%$return(i)$;
%}
%\end{algorithm}

Now we describe the overall encoding in~\cite{Gawry14} briefly. 
The structure consists of two bit strings $T$ and $U$ along with various auxiliary structures.
%Suppose that there is no $1 \le i < n$ such that $A[i]=A[i+1]$.
For $1 \le i < n$, if $k$ elements are popped from the min (max)-stack 
when we push $A[i] (1 \leq i < n)$ into both the stacks (from right to left),
we prepend $(^{k-1}~)$ and $0(1)$ to the currently generated $T$ and $U$ respectively.
Initially, when $i=n$, both min and max stacks push `)' so we do not prepend anything to both strings. 
But we can recover it easily because this is the last `)' in common. 
Finally, after pushing $A[1]$ into both the stacks, we pop the remaining elements from them, 
and store the number of these popped elements in min and max stack explicitly using $\lg{n}$ bits.
One can show that the size of $T$ is at most $2n$ bits, and that of $U$ is $n-1$ bits. 
Thus the total space usage is at most $3n$ bits. 
See Algorithm~\ref{TUalgorithm} for the pseudocode, and Figure \ref{fig:minmax} for an example.
%T may be 2(n-1) bits and U is n-1 bits if defined as above (if we didn't add anything when the A[n] is not pushed)
%-> need to check!!!!!!!

To recover any $\lg{n}$-bit substring, $D_{\MinA{}}[d_1 \dots d_{\lg{n}}]$, in constant time 
we construct the following auxiliary structures.
We first divide $D_{\MinA{}}$ into blocks of size $\lg{n}$, and for the starting position 
of each block, store its corresponding position in $T$. 
For this purpose, we construct a bit string  $B_{min}$ of length at most $2n$ 
such that $B_{min}[i]=1$ if and only if $T[i]$ 
corresponds to the start position of the $i$th-block in $D_{\MinA{}}$. 
We encode $B_{min}$ using the representation of 
Lemma~\ref{lemma:rankselect2} which takes $o(n)$ bits since the number of ones in $B_{min}$ is $2n/\lg{n}$.
Then if $d_1$ belongs to the $i$-th block, it is enough to recover the $i$-th 
and the $(i+1)$-st blocks in the worst case.

Now, to recover the $i$-th block of $D_{\MinA{}}$,
%First we check the size of $i$-th block using $B_{min}$.
we first compute the distance between $i$-th and $(i+1)$-st 1's in $B_{min}$.
%distance between the positions in $T$ corresponding to the start position of $i$-th and $(i+1)$-th blocks.
%corresponds to the start position of the $i$th-block in $D_{\MinA{}}$. $B_{min}$ has only $2n/\lg{n}$ $1$'s, so 
%by Lemma \ref{lemma:rankselect}, we can compress $B_{min}$ in $o(n)$ bits while support \rank{} and \select{} in constant time).
If this distance is less than $c\lg{n}$ for some fixed constant $c  > 9$, 
we call it a {\em min-good block}, otherwise, we call it a {\em min-bad block}.
We can recover a min-good block in $D_{\MinA{}}$ in $O(c)$
time using a $o(n)$-bit pre-computed table 
indexed by all possible strings of length $\lg{n}/4$ bits for $T$ 
and $U$ (we can find the position corresponding to the $i$-th block in $U$ in constant time), 
which stores the appropriate $O(\lg{n})$ bits of $D_{\MinA{}}$ obtained from them
(see~\cite{Gawry14} for details).
For min-bad blocks, we store the answers explicitly. 
This takes $(2n/(c\lg{n})) \cdot \lg{n} = 2n/c$ additional bits.
%but we can reduce the space more than $\lg{n}$ bits in $T$ corresponds to each bad block.
To save this additional space, we store the min-bad blocks in compressed form 
using the property that any min-bad block 
in $D_{\MinA{}}$ and $D_{\MaxA{}}$ cannot overlap more than $4\lg{n}$ bits in $T$,
(since any $2\lg{n}$ consecutive bits in $T$ consist of at least $\lg{n}$ bits 
from either $D_{\MinA{}}$ or $D_{\MaxA{}}$).
So, for $c>9$ we can save more than $\lg{n}$ bits by compressing the 
remaining $(c-4)\lg{n}$ bits in $T$ corresponding to each min-bad block in $D_{\MinA{}}$. 
Thus, we can reconstruct any $\lg n$-bit substring of $D_{\MinA{}}$ (and $D_{\MaxA{}}$) 
in constant time, using a total of $3n+o(n)$ bits. 
%Recovering $\lg{n}$ consecutive bits of $D_{\MaxA{}}$ is analogous.

We first observe that if there is a position $i$, for $1 \leq i < n$ such that $A[i] = A[i+1]$,
we cannot decode the $`)'$ in $T$ which corresponds to $A[i]$ only using $T$ and $U$ 
since we do not pop any elements from both min and max stacks 
when we push $A[i]$ into both stacks. Gawrychowski and Nicholson~\cite{Gawry14} 
handle this case by defining an ordering between equal elements (for example, by breaking the ties
based on their positions).
But this ordering does not help us in supporting the \PSV{} and \PLV{} queries.
%
%Thus the encoding of~\cite{Gawry14} only works when there are no repeated elements in $A$. 
We describe how to handle the case when there are repeated (consecutive) elements in $A$,
to answer the \PSV{} and \PLV{} queries. 

Gawrychowski and Nicholson~\cite{Gawry14} also show that any encoding 
that supports both $\RMinQ{}_A$ and $\RMaxQ{}_A$ cannot use less than 
$3n-\Theta(\lg{n})$ bits for sufficiently large $n$ (even if all elements in $A$ are distinct).
%(Details of data structures and proofs are in~\cite{Gawry14}).

\section{Extended DFUDS for colored 2d-Min heap}
\label{sec:exdfuds}

In this section, we describe an encoding of colored $2d$-Min heap on $A$ ($\CMinA{}$) using at most $3n+o(n)$ bits
while supporting $\RMinQ{}_A$, $\RRMinQ{}_A$, $\RLMinQ{}_A$, $\RkMinQ{}_A$, $\PSV{}_A$ and $\NSV{}_A$ 
in constant time. This is done by storing the color information of the nodes using a bit string of length at most $n$, in addition to the DFUDS representation of $\CMinA{}$.
We can also encode the colored $2d$-Max heap in a similar way.
In the worst case, this representation uses more space than the colored $2d$-Min heap encoding of Fischer~\cite{Fischer11},
but the advantage is that it separates the tree encoding from the color information. 
We later describe how to combine the tree encodings of the $2d$-Min heap and $2d$-Max heap, 
and (separately) also combine the color information of the two trees, to reduce the overall space.

Now we describe the main encoding of $\CMinA{}$.
The encoding consists of two parts: $D_{\CMinA{}}$ and $V_{min}$. 
The sequence $D_{\CMinA{}}$ is same as $D_{\MinA{}}$, the DFUDS representation of $\CMinA{}$, which 
takes $2n+o(n)$ bits and supports the operations in Lemma \ref{lemma:operation-dfuds} in constant time.
%In addition to these operations, we define $pre\_rank_{\CMinA{}}(x) = \rank{}_{D_{CMin(A}}(x-1, `)`)$ 
%and $pre\_select_{\CMinA{}}(i) = \select{}_{D_{\CMinA{}}}(i,  `)' )+1)$ on $\CMinA{}$
%which answer the label of the node in $\MinA{}$ corresponds to $D_{CMin(A}[x]$ 
%and the first position of node $i$ in $D_{\CMinA{}}$ respectively.
%These operations can be answered in constant time by Lemma \ref{lemma:rankselect}.

The bit string $V_{min}$ stores the color information of all nodes in $\CMinA{}$, 
%if the color of node $n_i$ and $n_j$ 
%are stored in $V_{min}[i]$ and $V_{min}[j]$ respectively, 
%then $i < j$ if and only if 
% $\findopen_{D_{min}}(pre\_select_{\CMinA{}}(i)-1) <  \findopen_{D_{min}}(pre\_select_{\CMinA{}}(j)-1)$
except the nodes which are the leftmost children of their parents (the color of these nodes is always blue),
as follows.
%in the same order of the matched open parenthesis of rightmost 
%symbol left to the first symbol of the nodes in $D_{\CMinA{}}$.
Suppose there are $p$ nodes in $\CMinA{}$, for $1 \leq p \leq n$, which are the leftmost children of their parents.
Then we define the bit string $V_{min}[0 \dots n-p]$ as follows.
For $1 \le i \le n-p$, $V_{min}[i]$ stores $0$ if the color of the node
$$node_{V_{min}}(i) = pre\_rank_{\CMinA{}} (\findclose{}_{D_{\CMinA{}}} (\select{}_{D_{\CMinA{}}}(i+1, `( (' )) +1)$$ 
in $\CMinA{}$ is red, and $1$ otherwise.
%, for $1 \leq i\leq n-p$.
%we assign the value $0$ if the color of this node is red and $1$ otherwise. 
This follows from the observation that  if there is an $i, 1 \leq i < 2n-1$ such that
$D_{\CMinA{}}[i] = `('$ and $D_{\CMinA{}}[i+1] = `)'$, then $D_{\CMinA{}}[i+2]$ corresponds to the 
node which is the leftmost child of the node $pre\_rank_{\CMinA{}}(D_{\CMinA{}}[i])$, 
so we skip these nodes by counting the pattern $`(~('$ in $D_{\CMinA{}}$.
Also, we set $V_{min}[0]=1$, which corresponds to the first open parenthesis in $D_{\CMinA{}}$.
%Therefore, the total length of $V_{min}$ is $n-p+1$.
%Therefore, the $node_{V_{min}}(i)$ which answers the label of the node 
%whose color is stored in $V_{min}[i]$ can be answered in constant time using above formula.
%$examle : V[2] store node 8's color (select : 3 findeclose(3) = 18 18+1=19 childrank(19) = 8)
For example, for $\CMinA$ in Figure~\ref{fig:color2d}, we store the node 3's color in $V_{min}[4]$.
This is becuase $\select{}_{D_{\CMinA{}}}(5, `( (' ) = 7$,  
$\findclose{}_{D_{\CMinA{}}}(7) + 1 = 11$ and $pre\_rank_{\CMinA{}}(11)=3$ 
(see Figure~\ref{figure:exdefuds}). 
We define the bit string $V_{max}$ in a similar way.
%We also define $node\_color_{\CMinA{}}(i)$ as the position of 
% $V_{min}$ that stores the color of the node $i$ in $\CMinA{}$. 
%This can be computed in constant time by
%\\\\
%$node\_color_{\CMinA{}}(i) = \left\{\begin{array}{ll}
% \textrm{undefined} & \textrm{\hspace{0.3cm}if $child\_rank_{\CMinA{}}(i) = 0$.}\\
%c - \rank{}_{D_{\CMinA{}}}( c , `(~)' ) & \textrm{\hspace{0.3cm}otherwise.}\\
%\end{array} \right.$
%\\\\
%Where $c = \findopen_{D_{\CMinA{}}}(pre\_select_{\CMinA{}}(i)-1)$

\begin{figure}[htp]

\centering
%\caption{First table}\label{tab:first}

\begin{center}
\includegraphics[scale=0.99]{color2d}
\end{center}

\scalebox{0.85}{
\begin{tabular}{|c|c|c|c|c|c|c|c|c|c|c|c|c|c|c|c|c|c|c|c|c|c|c|c|c|c|c|}
\hline
$D_{\CMinA{}}$&(&(&(&(&(&)&(&(&)&)&(&(&(&)&)&)&)&)&)&(&)&(&(&)&)&)\\ \hline
$pre\_rank_{\CMinA{}}$&0&0&0&0&0&0&1&1&1&2&3&3&3&3&4&5&6&7&8&9&9&10&10&10&11&12\\
\hline
\end{tabular}
}
\bigskip

\scalebox{0.85}{
\begin{tabular}{|c|c|c|c|c|c|c|c|c|}
\hline
$V_{min}$&1&1&0&1&0&1&1&0 \\\hline
$node_{V_{min}}$&-&9&8&7&3&6&5&12 \\
\hline
\end{tabular}
}
%
%\bigskip
%
%\begin{tabular}{|c|c|c|c|c|c|c|c|c|}
%\hline
%$node_{V_{min}}$&-&9&8&7&3&6&5&12 \\
%\hline
%\end{tabular}

\bigskip

\scalebox{0.85}{
\begin{tabular}{|c|c|c|c|c|c|c|c|c|c|c|c|c|c|}
\hline
$pre\_select_{\CMinA{}}$&1&7&10&11&15&16&17&18&19&20&22&25&26 \\\hline
$node\_color_{\CMinA{}}$&-&-&-&4&-&6&5&3&2&1&-&-&7\\
\hline
\end{tabular}
}
\caption{$D_{\CMinA{}}$, $pre\_rank_{\CMinA{}}$, $V_{min}[i]$, $node_{V_{min}}$, $pre\_select_{\CMinA{}}$ and $node\_color_{\CMinA{}}$ 
for colored $2d$-Min heap}\label{figure:exdefuds}
\end{figure}

The following lemma
%(proof in Appendix~\ref{app:theorem_extenddefus}) 
shows that encoding 
$\MinA{}$ and $V_{min}$ separately, using at most $3n+o(n)$ bits, has the same functionality as the $\CMinA{}$ 
encoding of Fischer~\cite{Fischer11}, which only takes $2.54n+o(n)$ bits.

\begin{lemma}
\label{lemma:extenddfuds}
For an array $A[1 \dots n]$ of length $n$, there is an encoding for $A$ which takes at most $3n+o(n)$ bits
and supports $\RMinQ{}_A$, $\RRMinQ{}_A$, $\RLMinQ{}_A$, $\RkMinQ{}_A$, $\PSV{}_A$ and $\NSV{}_A$ 
in constant time. 
\end{lemma}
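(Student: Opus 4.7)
The plan is to show that the encoding $(D_{\CMinA{}}, V_{min})$ uses at most $3n + o(n)$ bits and to reduce every listed query to constant-time operations on $D_{\CMinA{}}$ (Lemma~\ref{lemma:operation-dfuds}) and on $V_{min}$ (Lemma~\ref{lemma:rankselect2}), via a structural observation that links $V_{min}$ positions with sibling navigation in $\CMinA{}$.

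For the space, $D_{\CMinA{}}$ takes $2(n+1) + o(n)$ bits, and $V_{min}$ has at most $n+1$ bits, which Lemma~\ref{lemma:rankselect2} augments with rank/select using $o(n)$ extra bits; the total is $3n + o(n)$ bits. The queries $\RMinQ{}_A$, $\RRMinQ{}_A$, and $\PSV{}_A$ are answered directly from $D_{\CMinA{}}$ via the Fischer--Heun algorithm and the $parent$ operation of Lemma~\ref{lemma:operation-dfuds}. By Lemma~\ref{lemma:c2dheap}, $\NSV{}_A$ reduces to an $\NRS{}$ query plus one $subtree\_size$ query, and $\RLMinQ{}_A$, $\RkMinQ{}_A$ reduce, exactly as in Fischer's colored $2d$-Min heap, to \emph{modified} versions of $child$ and $child\_rank$ that access or count only the red children. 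Hence the core task is to support $\NRS{}$, modified $child$, and modified $child\_rank$ in constant time from $(D_{\CMinA{}}, V_{min})$.

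The key structural observation I would establish is: for any parent $p$ of degree $k \geq 2$ in $\CMinA{}$, the $V_{min}$ indices of $p$'s $k-1$ non-leftmost children form a block of $k-1$ consecutive integers. This is because $p$'s DFUDS block is a run $(^{k_p})$ (or $(^{k_0+1})$ for the root, whose extra leading open is absorbed by the sentinel $V_{min}[0]=1$), so the pattern `((' appears exactly $k-1$ times inside the block at adjacent positions, and no such pattern can straddle the block boundary because the character immediately before $p$'s block is either `)' (for non-root $p$) or absent (for the root). Under the standard DFUDS convention, the $j$-th left-to-right open inside $p$'s block points via $\findclose{}$ to the $(k-j+1)$-th child of $p$, which reverses the ordering: the leftmost interior `((' encodes the last child $c_k$ (smallest $V_{min}$ index $i_{\min}$), and the indices $i_{\min}, i_{\min}+1, \ldots, i_{\min}+k-2$ correspond to $c_k, c_{k-1}, \ldots, c_2$ respectively, so $c_j$ sits at $V_{min}$ index $i_{\min}+k-j$. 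The value $i_{\min}$ is $\rank_{D_{\CMinA{}}}(pre\_select_{\CMinA{}}(p), \text{`(('}) - 1$ (with a $+1$ shift when $p$ is the root), computable in $O(1)$ by Lemmas~\ref{lemma:operation-dfuds} and~\ref{lemma:rankselect2}.

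Given this bijection, each of the three operations becomes a single $\rank{}$/$\select{}$ call on $V_{min}$ followed by one application of $node_{V_{min}}(\cdot)$ (itself $O(1)$, since $node_{V_{min}}$ is a composition of $\select{}$, $\findclose{}$, and $pre\_rank$). Specifically, $\NRS{}(c_i)$ is the \emph{largest} $0$-position in $V_{min}[i_{\min}, i_{\min}+k-i-1]$ (note that rightward siblings correspond to \emph{smaller} $V_{min}$ indices); modified $child(p,j)$ is the $j$-th largest $0$-position in $V_{min}[i_{\min}, i_{\min}+k-2]$; and modified $child\_rank(c_i)$ counts $0$'s in $V_{min}[i_{\min}+k-i+1, i_{\min}+k-2]$. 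Combined with Lemma~\ref{lemma:c2dheap} and Fischer's $\RLMinQ{}$/$\RkMinQ{}$ algorithms, this yields constant-time support for all listed queries. The main obstacle I anticipate is rigorously verifying the reversed correspondence between $V_{min}$ indices and child order, and cleanly handling the root's block as a special case where the prepended open paren is absorbed by the sentinel $V_{min}[0]=1$.
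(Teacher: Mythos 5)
Your proposal is correct and follows essentially the same route as the paper: the same $2n+o(n)$-bit $D_{\CMinA{}}$ plus $\le n+o(n)$-bit $V_{min}$ encoding, the same reductions of $\NSV{}$ to $\NRS{}$ via Lemma~\ref{lemma:c2dheap} and of $\RLMinQ{}/\RkMinQ{}$ to red-sibling navigation, all implemented by $\rank{}/\select{}$ on $V_{min}$ combined with the DFUDS operations. The only difference is presentational: you state and prove explicitly the fact that the non-leftmost children of a node occupy a consecutive, right-to-left block of $V_{min}$, which the paper's formulas (e.g., its $\NRS{}$ computation with a parent check) use only implicitly.
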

\begin{proof}
The encoding consists of the $2n+o(n)$-bit encoding of $\MinA{}$ encoded using structure of~Lemma \ref{lemma:operation-dfuds},
together with the bit string $V_{min}$ that stores the color information of the nodes in $\CMinA{}$. 
We use a $o(n)$-bit auxiliary structure to support the \rank{}/\select{} queries on $V_{min}$ in constant time.
Since the size of $V_{min}$ is at most $n$ bits, the total space of the encoding is at most $3n+o(n)$ bits.

To define the correspondence between the nodes in $\CMinA{}$ and the positions in the bit string $V_{min}$, we define the following operation.
For $0 \le i \le n$, we define $node\_color_{\CMinA{}}(i)$ as the position of 
 $V_{min}$ that stores the color of the node $i$ in $\CMinA{}$. 
This can be computed in constant time, using $o(n)$ bits, by
%\rank{}_{D_{\CMinA{}}}( c , `((' ) - 1 & \textrm{\hspace{0.3cm}otherwise.}\\
%\\\\
$$node\_color_{\CMinA{}}(i) = \left\{\begin{array}{ll}
 \textrm{undefined} & \textrm{\hspace{0.3cm}if $child\_rank_{\CMinA{}}(i) = 0$}\\
 \rank{}_{D_{\CMinA{}}}( c , `((' ) - 1 & \textrm{\hspace{0.3cm}otherwise}\\
%c - \rank{}_{D_{\CMinA{}}}( c , `(~)' ) & \textrm{\hspace{0.3cm}otherwise.}\\
\end{array} \right.$$
%\\\\
where $c = \findopen_{D_{\CMinA{}}}(pre\_select_{\CMinA{}}(i)-1)$
(note that $node\_color_{\CMinA{}}$ is the inverse operation of $node_{V_{min}}$, i.e, 
if $node_{V_{min}}(k) = i$, then $node\_color_{\CMinA{}}(i) = k$).

Now we describe how to support the queries in constant time.
%For $1 \leq i \leq j \leq n$, let $d_j = pre\_select_{\CMinA{}} (i) $ and $d_j = pre\_select_{\CMinA{}}(j)$.
Fischer and Heun~\cite{fh-sjc11} showed that $\RMinQ{}_A(i, j)$ can be answered in constant time using $D_{\CMinA{}}$.
In fact, they return the position $\RRMinQ{}_A(i, j)$ as the answer to $\RMinQ{}_A(i, j)$.
%and the answered position is same as the $\RRMinQ{}_A(i, j)$.
Also, as mentioned earlier, $\PSV{}_A(i)$ = $parent_{\CMinA{}}(i)$, and hence can be answered in constant time. 
%$\PSV{}_A(i)$ also can be answered in constant time without color information using the definition of $2d$-Min heap
%that the parent of the node $i$ is same as the node $\PSV{}_A(i)$.
%
Therefore, it is enough to describe how to find $\RLMinQ{}_A(i, j)$, $\RkMinQ{}_A(i, j)$ and $\NSV{}_A(i)$ in constant time.
%For the queries, $\RLMinQ{}_A(i, j)$ and $\RkMinQ{}_A(i, j)$, we assume that node $\RRMinQ{}_A(i, j)$ is not the only 
%child of its parent in $\CMinA{}$.
%(This can be checked in constant time by checking whether $child\_rank_{\CMinA{}}$ $(\RRMinQ{}_A(i, j))$ is 0 or not.)
\\\\
\noindent{\textbf{$\boldsymbol{\RLMinQ{}_{A}(i, j)$}}}: 
As shown by Fischer and Huen~\cite{fh-sjc11}, all corresponding values of 
left siblings of the node $\RRMinQ{}_{A}(i, j)$ in $A$ 
are at least $A[\RRMinQ{}_{A}(i, j)]$ (i.e., the values of the siblings are in the non-increasing order, from left to right).
Also, for a child node $m$ of any of the left siblings of the node $\RRMinQ{}_{A}(i, j)$,  $A[m] > A[\RRMinQ{}_{A}(i, j)]$. 
%Any other nodes in $[i, j]$  have preorder ranks larger than $\RRMinQ{}_{A}(i, j)$.
Therefore, the position $\RLMinQ{}_{A}(i, j)$ corresponds to one of the left siblings 
of the node whose position corresponds to $\RRMinQ{}_{A}(i, j)$.

We first check whether the color of the node $\RRMinQ{}_{A}(i, j)$ is red or not using $V_{min}$.
If $$V_{min}[node\_color_{\CMinA{}}(\RRMinQ{}_{A}(i, j))] = 0$$ 
%(this means the color of the node $\RRMinQ{}_{A}(i, j)$ is red)
then $\RLMinQ{}_{A}(i, j)= \RRMinQ{}_{A}(i, j)$.
If not, we find the node $\leftmost(i, j)$ which is the leftmost sibling 
of the node $\RRMinQ{}_{A}(i, j)$ between the nodes in $[i \dots j]$.
$\leftmost(i, j)$ can be found in constant time by computing the depth of node $i$ and 
comparing this value with $d_{right}$, the depth of the node $\RRMinQ{}_{A}(i, j)$. More specifically,
$$\leftmost(i, j) = \left\{\begin{array}{ll}
 i & \textrm{\hspace{0.3cm}if $depth_{\CMinA}(i)=d_{right}$.}\\
 next\_sibling_{\CMinA{}}(la_{\CMinA{}}(i, d_{right})) & \textrm{\hspace{0.3cm}otherwise.}\\
%c - \rank{}_{D_{\CMinA{}}}( c , `(~)' ) & \textrm{\hspace{0.3cm}otherwise.}\\
\end{array} \right.$$
%If both values are same, $\leftmost(i, j) = i$ and otherwise, 
%$\leftmost(i, j) = next\_sibling_{\CMinA{}}(la_{\CMinA{}}(i, d_{right}))$.
In the next step, find the leftmost blue sibling $n_v$
such that there is no red sibling between $n_v$ and $\RRMinQ{}_{A}(i, j)$. 
%if child_rank(letfmost(i, j)) = 0 (i.e leftmost(i,j)) is the leftmost sibling)), then n_v = i$
%otherwise
This can be found in constant time by first finding the index $v$ using the equation
%computing 
$$v = \select_{V_{min}} (\rank_{V_{min}}(node\_color_{\CMinA{}}( \RRMinQ{}_{A}(i, j)), 0)+1, 0)-1$$
and then finding the node $n_v$ using $n_v = node_{V_{min}}(v)$.
%If the parent of the node $n_v$ is different from the parent of the $\leftmost(i, j))$
%(this is the case that all siblings between the leftmost sibling and  $\RRMinQ{}_{A}(i, j)$ are blue), 
%or they have same parent but node $child_rank_{\CMinA{}}(n_v)$ is  smaller than or equal to 
%$child\_rank_{\CMinA{}} (\leftmost(i, j))$,
%
If $child\_rank_{\CMinA{}}(n_v) \le child\_rank_{\CMinA{}} (\leftmost(i, j))$ or $child\_rank_{\CMinA{}}(n_v) = 1$ 
(this is the case that $\leftmost(i, j)$ can be the the lestmost sibling), then
$\RLMinQ{}_{A}(i, j) = \leftmost(i, j)$. Otherwise, $\RLMinQ{}_{A}(i, j) = n_v$.
%example with [1, 7] ? 
\\\\
\noindent{\textbf{ $\boldsymbol{\RkMinQ{}_{A}(i, j)$}}}: 
This query can be answered in constant time by returning the $k$-th sibling (in the left-to-right order)
of $\RLMinQ{}_{A}(i, j)$, if it exists. 
More formally, if $child\_rank_{\CMinA{}}(\RRMinQ{}_{A}(i, j)) - child\_rank_{\CMinA{}}(\RLMinQ{}_{A}(i, j))$ is at least $k-1$,
then $\RkMinQ{}_{A}(i, j)$ exists; and in this case,
$\RkMinQ{}_{A}(i, j)$ can be computed in constant time by computing
$$child_{\CMinA{}}(parent_{\CMinA{}}(\RRMinQ{}_{A}(i, j)), \RLMinQ{}_{A}(i, j)+k-1).$$
%if it exists, and if its preorder number is less than $\RRMinQ{}_{A}(i, j)$
%(using the $child\_rank$ operation on $\CMinA{}$).
%%
%More formally,
%$\RkMinQ{}_{A}(i, j) = child_{\CMinA{}}(parent_{\CMinA{}}(\RRMinQ{}_{A}(i, j)), \RLMinQ{}_{A}(i, j)+k-1)$,
%if $child\_rank_{\CMinA{}}(\RRMinQ{}_{A}(i, j)) - child\_rank_{\CMinA{}}(\RLMinQ{}_{A}(i, j)) \geq k-1$;
%otherwise $\RkMinQ{}_{A}(i, j)$ does not exist.
%
%and compute $v$ and the node $n_v$ which defined same as above.
%If the parent of the node $n_v$ is different from the parent of the $\leftmost_k(i, j))$ 
%or their parent are same but node $child_rank_{\CMinA{}}(n_v)$ is  smaller than or equal to 
%$child_rank_{\CMinA{}} (\leftmost_k(i, j))$,
%then $\RkMinQ{}_{A}(i, j) = \leftmost_k(i, j)$. Otherwise, $\RkMinQ{}_{A}(i, j)$ does not exist.
\\
\noindent{\textbf{ $\boldsymbol{\NSV{}_{A}(i)$}}}: 
%We first define the operation $\NRS{}(x_i)$ which returns the leftmost red sibling to the right (i.e., next red sibling) of $x_i$.
%The following observation~\cite{Fischer11} can be used to support $\NSV{}_A(i)$ on $A$ efficiently using the colored $2d$-Min heap representation.
%
%^\begin{lemma}[\cite{Fischer11}]
%\label{lemma:c2dheap}
%Let $\CMinA{}$ is the colored $2d$-Min heap on $A$. 
%Suppose there is a parent node $x$ in $\CMinA{}$ with its children $x_1 \dots x_k$. Then for $1 \leq i \leq k$,\\
%
%$\NSV{}_A(x_i) = \left\{\begin{array}{ll}
% \NRS{}(x_i)  & \textrm{\hspace{1cm} if $\NRS{}(x_i)$ exists,}\\
%x_k+subtree\_size(x_k)  &\textrm{\hspace{1.1cm}otherwise.}\\
%\end{array} \right.$\\
%\end{lemma}
%
%From Lemma~\ref{lemma:c2dheap}, 
By Lemma~\ref{lemma:c2dheap}, it is enough to show how to support $\NRS{}(i)$
in constant time (note that we can support $subtree\_size$ in constant time using Lemma~\ref{lemma:operation-dfuds}).
%We follow the Lemma \ref{lemma:c2dheap} to find $\NSV{}_{A}(i)$. 
%First check whether $\NRS{}(i)$ exists or not.
If node $i$ is the rightmost sibling, then $\NRS{}(i)$ does not exist.
%(This can be checked by $child\_rank_{\CMinA{}}$ and $degree_{\CMinA{}}$ in constant time.)
Otherwise we define $v'$ as 
$\select{}_{V_{min}}(\rank{}_{V_{min}}(node\_color_{\CMinA{}} (next\_sibling(i)) , 0), 0)$.
Let $n_{v'} = node_{V_{min}}(v')$.
If the parent of $n_{v'}$ is same as the parent of $i$, then
 $\NRS{}(i) = n_{v'}$; otherwise $\NRS{}(i)$ does not exist.
Finally, if $\NRS{}(i)$ does not exist, 
we compute the node $r$ which is the rightmost sibling of the node $i$ can be found by 
$$child_{\CMinA{}}(parent_{\CMinA{}}(i), degree_{\CMinA{}}(parent_{\CMinA{}}(i))-1).$$ 
Then $\NSV{}_{A}(i) = r + subtree\_size_{\CMinA{}}(r)$.
All these operations can be done in constant time.
\end{proof}

%%%%%%%%%%%%%%%%%%%%%%%%%%%%%%%%%%%%%%%%%%%%%%%%%%%%%%%%%%%%%%%%%%%%%%
\section{Encoding colored 2d-Min and 2d-Max heaps}
\label{sec:minmax}
In this section, we describe our encodings for supporting various subsets of operations, proving the results stated in Theorem~\ref{theorem:minmax}.
As mentioned in Section~\ref{subsec:2dminheap}, the TC-encoding of the colored $2d$-Min heap of Fischer~\cite{Fischer11}
can answer $\RMinQ{}_A$, $\RRMinQ{}_A$, $\PSV{}_A$ and $\NSV{}_A$ queries
in $O(1)$ time, using $2.54n+o(n)$ bits.
The following lemma shows that we can also support the queries $\RLMinQ{}_A$ and $\RkMinQ{}_A$ 
using the same structure.
\begin{lemma}
\label{lemma:tcencoding}
For an array $A[1 \dots n]$ of length $n$, 
$\RLMinQ{}_A$, $\RkMinQ{}_A$ can be answered in constant time by the
TC-encoding of colored $2d$-Min heap.
\end{lemma}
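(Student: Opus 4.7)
The plan is to reduce directly to the algorithms developed in the proof of Lemma~\ref{lemma:extenddfuds}, which expresses both $\RLMinQ{}_A(i,j)$ and $\RkMinQ{}_A(i,j)$ as constant-length sequences of operations of two kinds on the colored $2d$-Min heap: (i) standard DFUDS navigational operations (as listed in Lemma~\ref{lemma:operation-dfuds}), used to compute $\RRMinQ{}_A(i,j)$, $\leftmost(i,j)$, and to step to a designated sibling of the resulting node; and (ii) two color-sensitive primitives, namely deciding whether a given node is red and locating the nearest red sibling to its left. Since the TC-encoding preserves the tree structure of $\CMinA{}$, the operations of type (i) are inherited in $O(1)$ time; what remains is to implement (ii).

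The TC-encoding explicitly supports the modified $child_{\CMinA{}}(x,i)$, returning the $i$-th red child of $x$, and the modified $child\_rank_{\CMinA{}}(x)$, returning the number of red siblings strictly to the left of $x$. Writing $p = parent_{\CMinA{}}(x)$ and letting $k$ be the modified $child\_rank$ of $x$, the node $x$ is red iff the modified $child_{\CMinA{}}(p, k+1)$ equals $x$, and the nearest red sibling to the left of $x$, whenever it exists, is precisely the modified $child_{\CMinA{}}(p, k)$. Both primitives therefore run in $O(1)$ time on the TC-encoding.

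With these primitives in hand, the algorithm from Lemma~\ref{lemma:extenddfuds} transcribes directly. To answer $\RLMinQ{}_A(i,j)$, compute $r = \RRMinQ{}_A(i,j)$ and $\leftmost(i,j)$; if $r$ is red, return $r$; otherwise locate the nearest red sibling $s$ to the left of $r$, take $n_v$ to be the sibling immediately following $s$ (or the leftmost child of $parent_{\CMinA{}}(r)$ if no such $s$ exists), and return whichever of $n_v$ and $\leftmost(i,j)$ has the larger standard $child\_rank_{\CMinA{}}$. For $\RkMinQ{}_A(i,j)$, once $\RLMinQ{}_A(i,j)$ is known, apply the explicit formula from Lemma~\ref{lemma:extenddfuds} in terms of $parent_{\CMinA{}}$, the standard $child$, and $child\_rank_{\CMinA{}}$. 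The one step that requires genuine care---and is really the only nontrivial point---is verifying that the TC-encoding exposes the full suite of standard DFUDS navigational operations from Lemma~\ref{lemma:operation-dfuds} alongside its two explicitly advertised modified operations, even though the coloring is interleaved with the tree structure in Fischer's construction.
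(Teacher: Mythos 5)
Your proposal is correct and follows essentially the same route as the paper: the paper likewise implements the color-sensitive steps of the algorithm from Lemma~\ref{lemma:extenddfuds} via Fischer's modified operations $mchild_{\CMinA{}}$ and $mchild\_rank_{\CMinA{}}$ (exactly your two primitives), and relies on the TC-encoding's constant-time support of the standard ordinal-tree operations ($depth$, $next\_sibling$, $la$, $child$, $child\_rank$, cited from He, Munro and Satti) for $\leftmost(i,j)$ and the final sibling arithmetic. The one point you flag as needing verification is settled by that citation, and your explicit check of whether $\RRMinQ{}_A(i,j)$ is red before looking for a red sibling to its left is, if anything, a slightly more careful rendering of the same argument.
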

\begin{proof}
Fischer~\cite{Fischer11} defined two operations, which are modifications of the $child$ and $child\_rank$, as follows:
\begin{itemize}
\item $mchild_{\CMinA{}}(x, i)$ - returns the $i$-th red child of node $x$ in \CMinA{}, and
\item $mchild\_rank_{\CMinA{}}(x)$ - returns the number of red siblings to the left of node $x$ in \CMinA{}.
\end{itemize}
He showed that the TC-encoding of the colored $2d$-Min heap 
can support $mchild_{\CMinA{}}(x, i)$ and $mchild\_rank_{\CMinA{}}(x)$ in constant time. 
Also, since the TC-encoding supports $depth_{\CMinA{}}$, $next\_sibling_{\CMinA{}}$,
$la_{\CMinA{}}$, $child_{\CMinA{}}$ and $child\_rank_{\CMinA{}}$ in constant time on ordinal trees~\cite{hms-icalp07},
we can support $\leftmost(i, j)$ (defined in the proof of the Lemma~\ref{lemma:extenddfuds}) in constant time.
For answering $\RLMinQ{}_A(i, j)$, we first find the previous red sibling $l$ 
of $\RRMinQ{}_A(i, j)$ using $mchild_{\CMinA{}}$ and $mchild\_rank_{\CMinA{}}$. 
If such a node $l$ exists, we compare the child ranks of $next\_sibling_{\CMinA}(l)$ 
and $\leftmost(i, j)$, and return the node with the larger rank value as the answer.
$\RkMinQ{}_A(i, j)$ can be answered by returning the $k$-th sibling (in the left-to-right order) of $\RLMinQ{}_A(i, j)$
using $child_{\CMinA{}}$ and $child\_rank_{\CMinA{}}$, if it exists.
\end{proof}
By storing a similar TC-encoding of colored $2d$-Max heap, in addition to the 
structure of Lemma~\ref{lemma:tcencoding}, we can support all the operations 
mentioned in Theorem~\ref{theorem:minmax}(c) in $O(1)$ time. This uses a
total space of $5.08n+o(n)$ bits. We now describe alternative encodings to reduce the 
overall space usage.
%Our encoding combines the extended DFUDS of $\CMinA{}$ and $\CMaxA{}$
%%Section \ref{sec:exdfuds} 
%which are composed of $D_{\CMinA{}}$, $D_{\CMaxA{}}$, $V_{min}$ and $V_{max}$.
%First we describe how to combine $D_{\CMinA{}}$ and $D_{\CMaxA{}}$ using $3.322n+o(n)$ bits. 
%This encoding supports $\RMinQ{}_A$, $\RMaxQ{}_A$, $\RRMinQ{}_A$, $\RRMaxQ{}_A$,  $\PSV{}_A$, and 
% $\PLV{}_A$, in constant time. After that we describe how to combine $V_{min}$ and $V_{max}$ 
% to support remained queries using $4.585n+o(n)$ bits 
% Finally, we give a $4.085n+o(n)$-bits encoding that supports all of these queries.
% This encoding takes less space than before, but does not support the constant query time.

More specifically, we show that a combined encoding of $D_{\CMinA{}}$ and $D_{\CMaxA{}}$, using at most $3.17n+o(n)$ bits, can be used to answer 
$\RMinQ{}_A$, $\RMaxQ{}_A$, $\RRMinQ{}_A$, $\RRMaxQ{}_A$, $\PSV{}_A$, and $\PLV{}_A$ queries (Theorem~\ref{theorem:minmax}(a)).
To support the queries in constant time, we use a less space-efficient data structure that encodes the same structures, using at most $3.322n+o(n)$ bits 
(Theorem~\ref{theorem:minmax}(b)). 
Similarly, a combined encoding of $D_{\CMinA{}}$, $D_{\CMaxA{}}$, $V_{min}$ and $V_{max}$ using at most $4.088n+o(n)$ bits 
can be used to answer $\RLMinQ{}_A$, $\RkMinQ{}_A$, $\NSV{}_A$, $\RLMaxQ{}_A$, $\RkMaxQ{}_A$, and $\NLV{}_A$ 
queries in addition (Theorem~\ref{theorem:minmax}(c)). 
Again, to support the queries in constant time, we design a 
less space-efficient data structure using at most $4.58n+o(n)$ bits (Theorem~\ref{theorem:minmax}(d)).
%the data structure version of this encoding that supports the queries in constant time uses 

In the following, we first describe the data structure of Theorem~\ref{theorem:minmax}(b) 
followed by the structure for Theorem~\ref{theorem:minmax}(d). 
Next we describe the encodings of Theorem~\ref{theorem:minmax}(a) and Theorem~\ref{theorem:minmax}(c).

\subsection{Combined data structure for $D_{\CMinA{}}$ and $D_{\CMaxA{}}$} 
\label{subsubdd}

%\paragraph{Encoding $D_{\CMinA{}}$ and $D_{\CMaxA{}}$. }
As mentioned in Section~\ref{subsec:minmax2}, the encoding of Gawrychowski and Nicholson~\cite{Gawry14} 
consists of two bit strings $U$ and $T$ of total length at most $3n$, along with the encodings of $B_{min}$, $B_{max}$ and 
a few additional auxiliary structures of total size $o(n)$ bits. In this section, we denote this encoding by $E$.
To encode the DFUDS sequences of $\CMinA{}$ and $\CMaxA{}$ in a single structure, we use $E$
%the $3n+o(n)$-bits stack-based encoding proposed by 
%Gawrychowski and Nicholson~\cite{Gawry14} described in Section \ref{subsec:minmax}
with some modifications, which we denote by $E'$.
%The weak point of  $E$ is that if there is a position $1 \leq i < n$ such that $A[i] = A[i+1]$,
%we do not do not pop any elements from both min and max stacks when we push $A[i]$ into both stacks. 
%So we cannot decode $`)'$ in $T$ which corresponds to $A[i]$ only using $T$ and $U$.
As described in Section~\ref{subsec:minmax2}, encoding scheme of Gawrychowski and Nicholson 
cannot be used (as it is) to support the $\PSV{}$ and $\PLV{}$ queries 
if there is a position $i$, for $1 \leq i < n$ such that $A[i] = A[i+1]$.
To support these queries, we define an additional bit string
$C[1 \dots n]$ such that $C[1] = 0$, and for $1 < i \leq n$, $C[i] = 1$ iff $A[i-1] = A[i]$. 
% and $0$ otherwise. 
%(we set $C[1] = 0$) 
%because we define $A[0] = -\infty or \infty$)
%If there are $k$ such elements,
%the size of the set $X=\{i | 1 \leq i \leq n, C[i] = 1\}$ is $k$,
If the bit string $C$ has $k$ ones in it, then
we represent $C$ using $\lg{n \choose k} + o(n)$ bits
while supporting \rank{}, \select{} queries and decoding any $\lg{n}$ consecutive bits in $C$ in constant time,
using Lemma~\ref{lemma:rankselect2}.
We also define a new array $A'[0 \dots n-k]$ 
by setting $A'[0] = A[0]$, and for $0 < i \le n-k$, $A'[i] = A[\select_{C}(i, 0)]$.
(Note that $A'$ has no consecutive repeated elements.)
In addition, we define another sequence $D'_{\CMinA{}}$ of size $2n-k$ as follows.
%$D'_{\CMinA{}}$ is the $2n-k$ bit string which is generated by modifying $D_{\CMinAp{}}$ as follows. 
Suppose $D_{\CMinAp{}} = (^{\delta_1}~) \dots (^{\delta_{n-k}}~)$, for some $ 0 \leq \delta_1 \dots \delta_n \leq n-k$,
then we set $D'_{\CMinA{}} = (^{\delta_1+\epsilon_1}~) \dots (^{\delta_{n-k}+\epsilon_{n-k}}~)$, where $\delta_i+\epsilon_i$ is the number of elements popped 
when $A[i]$ is pushed into the min-stack of $A$, for $ 1 \leq i \leq n-k$.
%for $1 \leq i \leq n-k$, the part of $0^{\delta_i}1$ is changed to $0^{\delta_i+\epsilon}1$ in $D'_{\CMinA{}}$
%if $\epsilon$ more elements are popped when $A[i]$ is pushed in the min stack of $A$.
(Analogously, we define $D'_{\CMaxA{}}$.)

%We encode $D'_{\CMinA{}}$ and $D'_{\CMaxA{}}$ using a modified $E$ on $A$, which we denote by $E'$.
The encoding $E'$ defined on $A$ consists of two bit strings $U'$ and $T'$, 
along with $C$, $B'_{min}$, $B'_{max}$ and additional auxiliary structures (as in $E$). 
Let $U$ and $T$ be the bit strings in $E$ defined on $A'$. 
Then $U'$ is same as $U$ in $E$, and size of $U'$ is $n-k-1$ bits. 
To obtain $T'$, we add some additional open parentheses to $T$ as follows.
Suppose $T = (^{\delta_1}~) (^{\delta_2}~) \dots (^{\delta_{n-k}}~)$, where $ 0 \leq \delta_i \leq n-k$ for $1 \le i \le n-k$.
 Then $T' = (^{\delta_1+\epsilon_1}~) \dots (^{\delta_{n-k}+\epsilon_{n-k}}~)$, 
where $\delta_i+\epsilon_i$ is the number of elements are popped when $A[i]$ is pushed into the min or max stack of $A$, for $ 1 \leq i \leq n-k$
(see Figure \ref{fig:minmax2} for an example). 
Since the length of $T$ is at most $2(n-k)$, and $|T'|-|T| = \sum_{i=1}^{n-k} \epsilon_i \le k$, 
the size of $T'$ is at most $2n-k$ bits.
The encodings of $B'_{min}$ and $B'_{max}$ are defined on $D'_{\CMinA{}}$, $D'_{\CMaxA{}}$ and $T'$, 
 similar to $B_{min}$ and $B_{max}$ in $E$.
%since the position of $1$'s in $B_{min}$ and $B_{max}$ is changed by $D'_{\CMinA{}}$, $D'_{\CMaxAp{}}$ and modified $T$
The total size of the encodings of the modified $B'_{min}$ and $B'_{max}$ is $o(n)$ bits. 
All the other auxiliary structures use $o(n)$ bits.
Although we use $E'$ instead of $E$, we can use the decoding algorithm in $E$ without any modifications because
all the properties used in the algorithm still hold even though $T'$ has additional open parentheses compared to $T$.
Therefore from $E'$ we can reconstruct any 
$\lg{n}$ consecutive bits of $D'_{\CMinA{}}$ or $D'_{\CMaxA{}}$ in constant time,
and thus we can support \rank{} and \select{} on these 
strings in constant time with $o(n)$ additional structures by Lemma \ref{lemma:rankselect2}.

\begin{figure}[htbp]
\begin{center}
    \includegraphics[scale=0.92]{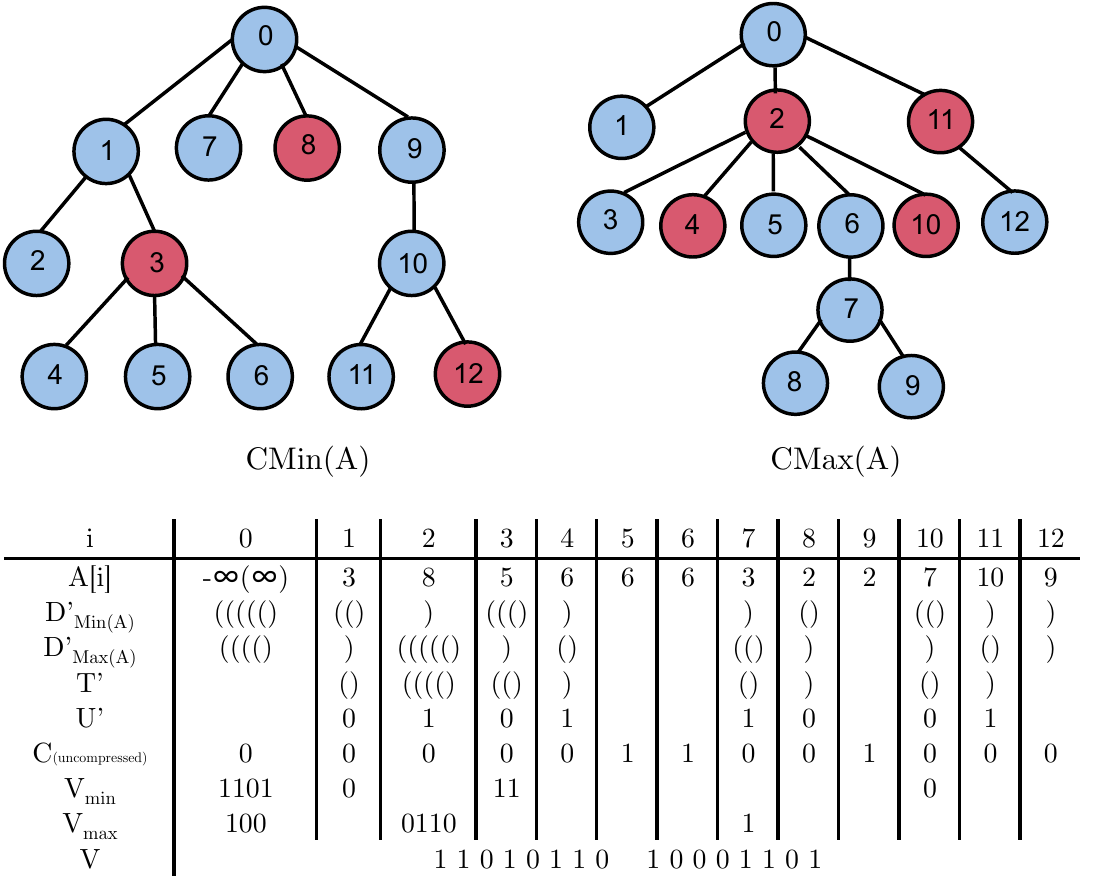}
    \caption{Data structure combining the colored $2d$-Min heap and colored $2d$-Max heap of $A$.  
    $C$ is represented in uncompressed form.} \label{fig:minmax2}
\end{center}
\end{figure}
%$A'$ is same as $A$ in Figure~\ref{fig:minmax}.

\subsubsection{Decoding $D_{\CMinA{}}$ and $D_{\CMaxA{}}$}
We use the following auxiliary structures to decode $D_{\CMinA{}}$ from $D'_{\CMinA{}}$ and $C$. 
%We store these in compressed form using $o(n)$ bits.
For this, we first define a correspondence between $D_{\CMinA{}}$ and $D'_{\CMinA{}}$ as follows.
Note that both $D_{\CMinA{}}$ and $D'_{\CMinA{}}$ have the same number of open parentheses,
but $D'_{\CMinA{}}$ has fewer close parentheses than $D_{\CMinA{}}$.
The $i$th open parenthesis in $D_{\CMinA{}}$ corresponds to the $i$th open parenthesis in $D'_{\CMinA{}}$.
Suppose there are $\ell$ and $\ell'$ ($\le \ell$) close parentheses between the $i$th and the $(i+1)$st open parentheses 
in $D_{\CMinA{}}$ and $D'_{\CMinA{}}$, respectively. Then the last $\ell'$ close parentheses in $D_{\CMinA{}}$
correspond, in that order, to the $\ell'$ close parentheses in $D'_{\CMinA{}}$; the remaining close parentheses in 
$D_{\CMinA{}}$ do not have a corresponding position in $D'_{\CMinA{}}$.

We construct three bit strings $P_{min}$, $Q_{min}$ and $R_{min}$ of lengths $2n-k$, $\lceil 2n/\lg{n}\rceil$ and $\lceil 2n/\lg{n}\rceil$, respectively, as follows.
For $1 \leq i \leq \lceil 2n/\lg{n}\rceil$, if the position $i\lg{n}$  in $D_{\CMinA{}}$ has its corresponding position $j$ in $D'_{\CMinA{}}$, 
then we set $P_{min}[j] = 1$, $Q_{min}[i]=0 $ and $R_{min}[i] = 0$.
If position $i\lg{n}$ in $D_{\CMinA{}}$ has no corresponding position in $D'_{\CMinA{}}$ 
but for some $k_i$ where $1 \le k_i < \lg n$, suppose there is a leftmost position $q = i\lg{n}+k_i$
which has its corresponding position $j$ in $D'_{\CMinA{}}$.
Then we set $P_{min}[j] = 1$, $Q_{min}[i]=1 $ and $R_{min}[i] = 0$.
Finally, if all positions between $i\lg{n}$ and $(i+1)\lg{n}$ in $D_{\CMinA{}}$ have no 
corresponding position in $D'_{\CMinA{}}$, then we set $Q_{min}[i]=1 $ and $R_{min}[i] = 1$.
In remaining positions for $P_{min}$, $Q_{min}$ and $R_{min}$, we set their values as $0$.
We also store the values, $k_i$ explicitly, for $1 \leq i \leq \lceil 2n/\lg{n}\rceil$, whenever they are defined
(as in the second case). Since $k_i < \lg{n}$, we can store all the $k_i$ values 
explicitly using at most  $2n\lg{\lg{n}}/\lg{n} = o(n)$ bits. 

%The size of $P_{min}$ is $2n-k$ bits and the size of $Q_{min}$ and $R_{min}$ is $\lceil 2n/\lg{n}\rceil$ bits 
Since the bit strings $P_{min}$, $Q_{min}$ and $R_{min}$ have at most $2n/\lg{n}$ 1's each,
they can be represented using the structure of Lemma \ref{lemma:rankselect2}, 
taking $o(n)$ bits while supporting \rank{} and \select{} queries in constant time.
We define $P_{max}$, $Q_{max}$, $R_{max}$ in the same way, and represent them analogously.
%\\\\
%\noindent{\textbf{1) }$\boldsymbol{B_{min}$}}
%This string is described above as the auxiliary data structures in $E'$ on $A$. 
%We divide the $D'_{\CMinA{}}$ into blocks of size $\lg{n}$.
%Then for $1 \leq i \leq 2n-k$, $B_{min}[i] = 1$ if and only if $T[i]$ 
%corresponds to the start position of the block in $D'_{\CMinA{}}$.
%%\\\\
%%\noindent{\textbf{2) }$\boldsymbol{B^{ori}_{min}$}}
%%We divide the $D'_{\CMinA{}}$ into blocks of size $\lg{n}$.
%%Then for $1 \leq i \leq 2n-k$, $B^{ori}_{min}[i] = 1$ if and only if 
%%$i$ is the start position of the block in $D'_{\CMinA{}}$
%\\\\
%\noindent{\textbf{3) }$\boldsymbol{P_{min}, Q_{min}, R_{min}}$}

In addition to these bit strings, we construct two pre-computed tables.
In the rest of this section, we refer to the parenthesis strings (such as 
$D_{\CMinA{}}$ and $D'_{\CMinA{}}$) also as bit strings.
To describe these tables, we first define two functions $f$ and $f'$,
each of which takes two bit strings $s$ and $c$ as parameters, and returns
a bit string of length at most $|s|+|c|$, as follows.
%
%and start position of $s$ is corresponds to the start position in $c$, 
%Then we define two functions $f(s, c)$ and $f'(s, c)$ as follows:
%which returns the substring of $D_{\CMinA{}}$ 
%which starts from the corresponding position of $s$ in $D'_{\CMinA{}}$.
%(The start symbol of $c$ cannot be $1$ which represents consecutive equal elements in $A$.) 
\\
%\\

\begin{minipage}{0.0\linewidth}  
$\left\{\begin{array}{lllll}
f(s, \epsilon) = s \\
f(\epsilon, c) = \epsilon \\
%f((^{\delta}, c) = (^{\delta} \\
f(s, 1 \cdot c_1) = ) \cdot f(s, c_1) \\
f((^{\delta} ~ ) \cdot s_1, 0  \cdot c_1) = (^{\delta} \cdot ) \cdot f(s_1, c_1) 
\end{array} \right.$
\end{minipage} 
\hspace{5.8cm} 
\begin{minipage}{0\linewidth} 
$\left\{\begin{array}{lllll}
f'(s, \epsilon) = s \\
f'(\epsilon, c) = \epsilon \\
f'(s,  c_1  \cdot 1) = f'(s, c_1) \cdot ) \\
f'(s_1 \cdot (^{\delta} ~ ) , c_1  \cdot 0) = f'(s_1, c_1) \cdot (^{\delta} \cdot )
\end{array} \right.$ 
\end{minipage}
\\

One can easily show that if $s$ is a substring of $D'_{\CMinA{}}$ and $c$ is a substring of $C$
whose starting (ending) position corresponds to the starting (ending) position in $s$,
then $f(s, c)$ ($f'(s, c)$) returns the substring of $D_{\CMinA{}}$ whose 
starting (ending) position corresponds to the starting (ending) position in $s$,
%XXX
%whose starting (end) position corresponds to the starting (end) position of $s$.
%Suppose $s$ and $c$ are substrings in $D'_{\CMinA{}}$ and $C$ respectively. 

We construct a pre-computed table $T_f$ that, for each possible choice of bit strings $s$ and $c$ 
of length $(1/4)\lg{n}$, stores the bit string $f(s, c)$.
These pre-computed tables can be used to decode a substring of 
$D_{\CMinA{}}$ given a substring of $D'_{\CMinA{}}$ (denoted $s$) and a 
substring of $C$ whose bits correspond to $s$.
%along with the number of bits 
%of $s$ and $c$ that are used in constructing the above $1/4\lg{n}$ bits of $f(s, c)$.
%The pre-computed table $T_f$ has all possible $s$ and $c$'s  of length $1/4\lg{n}$ bits as an index, 
%and answer the first $1/4\lg{n}$ bits of $f(s, c)$. In addition to that for each index, we store how 
%many bits of $s$ and $c$ are used for first $1/4\lg{n}$ bits of $f(s, c)$ using $O(\lg{\lg{n}})$ bits.
The total space usage of $T_f$ 
is $2^{(1/4)\lg{n}} \cdot 2^{(1/4)\lg{n}} \cdot ((1/2)\lg{n}) = o(n)$ bits.
We can also construct  $T_{f'}$ defined analogous to $T_f$ using $o(n)$ bits.

Now we describe how to decode $\lg{n}$ consecutive bits of $D_{\CMinA{}}$ in 
constant time. 
(We can decode $\lg{n}$ consecutive bits of $D_{\CMaxA{}}$ in a similar way.)
Suppose we divide $D_{\CMinA{}}$ into blocks of size $\lg{n}$.
As described in Section \ref{subsec:minmax2}, it is enough to show that 
for $1 \leq i \leq \lceil 2n/\lg{n}\rceil$,
we can decode $i$-th block of $D_{\CMinA{}}$ in constant time.
First, we check the value of the $R_{min}[i]$. 
If $R_{min}[i] = 1$, then the $i$-th block in $D_{\CMinA{}}$ consists of a sequence of $\lg{n}$ consecutive close parentheses.
Otherwise, there are two cases depending on the value of $Q_{min}[i]$.
We compute the position $p$ which is a position in $D'_{\CMinA{}}$ (it's exact correspondence 
 in $D_{\CMinA{}}$ depends on the value of the bit $Q_{min}[i]$), and then compute
the position $c_p$ in $C$ which corresponds to $p$ in $D'_{\CMinA{}}$, using the following equations:
$$p = \select{}_{P_{min}}(i - \rank{}_{R_{min}}(i, 1) ,1)$$
% be the position in $D'_{\CMinA{}}$ and 
$$c_p = \left\{\begin{array}{ll}
 \select{}_{C}( \rank{}_{D'_{\CMinA{}}}(p,~ ')' ), 0) & \textrm{\hspace{0.3cm} if $D'_{\CMinA{}}[p] = ')'$}\\
 \select{}_{C}(\rank{}_{D'_{\CMinA{}}}(p,~ ')' )+1, 0) & \textrm{\hspace{0.3cm}otherwise}\\
%c - \rank{}_{D_{\CMinA{}}}( c , `(~)' ) & \textrm{\hspace{0.3cm}otherwise.}\\
\end{array} \right.$$
%$c_p =\select{}_{C}( \rank{}_{D'_{\CMinA{}}}(p, ')' ), 0)$ if $D'_{\CMinA{}}[p] = ')'$ and 
%$\select{}_{C}(\rank{}_{D'_{\CMinA{}}}(p, ')' )+1, 0)$ otherwise.
\\\\
\noindent{\textbf{Case (1) }$\boldsymbol{Q_{min}[i] = 0$}.}
In this case, we take the $\lg{n}$ consecutive bits of $D'_{\CMinA{}}$ starting from $p$, and the 
$\lg{n}$ consecutive bits of $C$ starting from the position $c_p$
%the positions $p$ and $\select{}_{C} (\rank{}_{D'_{\CMinA{}}}(p, ')' ), 0)$ respectively.    
%(otherwise $\select{}_{C} (c_p+1, 0)$)
%(We call theses substrings as $s_p$ and $c_p$ respectively)
(padding at the end with zeros if necessary). 
%(If the corresponding position in $C$ does not exist, then we set the corresponding 
%substring of $C$ as $\epsilon$.)
Using these bit strings, we can decode the  $i$-block in $D_{\CMinA{}}$ by looking up $T_f$ 
with these substrings (a constant number of times, until the pre-computed table generates 
the required $\lg{n}$ bits).
%The constant number of times table lookup is enough to generate $\lg{n}$ bits 
%because for bit string $s$ and $c$, the size of f($s$, $c$) is always longer than the size of $s$.
%If remained string $s_p$  has size less than $1/4\lg{}$ during the table-lookup, we append dummy $')'$'s to the string 
%and delete them after the table lookup.
%(We do a similar thing when remained size $c$ has size less than $1/4\lg{}$ during the decoding)
Since the position $p$ corresponds to the starting position of the $i$-th block in $D_{\CMinA{}}$ in this case, 
we can decode the $i$-th block of $D_{\CMinA{}}$ in constant time.
\\\\
\noindent{\textbf{Case (2) }$\boldsymbol{Q_{min}[i]= 1$}.}
First we decode $\lg{n}$ consecutive bits of $D_{\CMinA{}}$ whose starting position 
corresponds to the position $p$ using the same procedure as in Case (1).
Let $S_1$ be this bit string.
Next, we take the $\lg{n}$ consecutive bits of $D'_{\CMinA{}}$ ending with position $p$, 
and the $\lg n$ consecutive bits of $C$ ending with position $c_p$
%positions $p$ and $\select{}_{C} (\rank{}_{D'_{\CMinA{}}}(p, ')' ), 0)$ respectively.
(padding at the beginning with zeros if necessary). 
%(If the corresponding position in $C$ does not exist, we set corresponding substring of $C$ as the last $\lg{n}-1$ bits of $C$ with extra $0$ in the rightmost position.)
Then we can decode the $\lg{n}$ consecutive bits of 
$D_{\CMinA{}}$ whose ending position corresponds to the $p$
by looking up $T_{f'}$ (a constant number of times) with these substrings. 
Let $S_2$ be this bit string.
By concatenating $S_1$ and $S_2$, we obtain a 
$2\lg{n}$-bit substring of $D_{\CMinA{}}$
which contains the starting position of the $i$-th block of $D_{\CMinA{}}$
(since the starting position of the $i$-th block in $D_{\CMinA{}}$, and the 
position which corresponds to $p$ differ by at most $\lg{n}$).
Finally, we can obtain the $i$-th block in $D_{\CMinA{}}$ by skipping the first 
$\lg{n}-k_i$ bits in $S_1 \cdot S_2$, and taking $\lg{n}$ consecutive bits from there. %of the sequence from there.
%\begin{theorem}
%\label{theorem:cmincmax}
%For an array $A[1 \dots n]$ of length $n$, there is an encoding for $A$ which takes 
%at most $3.322n+o(n)$ bits and supports $\RMinQ{}_A$, $\RRMaxQ{}_A$, 
%$\RRMinQ{}_A$, $\RRMaxQ{}_A$,  $\PSV{}_A$, and $\PLV{}_A$ queries in constant time.
%\end{theorem}
%\begin{proof}

From the encoding described above, we can decode any $\lg{n}$ 
consecutive bits of $D_{\CMinA{}}$ and $D_{\CMaxA{}}$ in constant time.
Therefore by Lemma~\ref{lemma:extenddfuds},
we can answer all queries supported by $\CMinA{}$ and $\CMaxA{}$ 
(without using the color information) in constant time.
%and these queries doesn't use any color information of them.
If there are $k$ elements such that $A[i-1] = A[i]$ for $1 \leq i \leq n$, then the size of $C$ is 
$\lg{n \choose k} + o(n)$ bits, and the size of $E'$ on $A$ is
$3n-2k+o(n)$ bits. All other auxiliary  bit strings and tables take $o(n)$ bits.
Therefore, by the Lemma~\ref{lemma:encoding}, we can encode $A$ 
using $3n-2k+\lg{n \choose k}+o(n) \le ((1+\lg{5})n+o(n) < 3.322n+o(n)$
bits. Also, this encoding supports the
queries in Theorem~\ref{theorem:minmax}(b) 
(namely $\RMinQ{}_A$, $\RMaxQ{}_A$, $\RRMinQ{}_A$, 
$\RRMaxQ{}_A$,  $\PSV{}_A$ and $\PLV{}_A$,
which do not need the color information) in constant time. 
This proves Theorem~\ref{theorem:minmax}(b).

Note that if $k=0$ (i.e, there are no consecutive equal elements), $E'$ on $A$ is same as $E$ on $A$.
Therefore, we can support all the queries in Theorem~\ref{theorem:minmax}(b) using $3n+o(n)$ bits
with constant query time.
%\qed
%\end{proof}

\subsubsection{Encoding $V_{min}$ and $V_{max}$}
We simply concatenate $V_{max}$ and $V_{min}$ on $A$ and store it as bitstring $V$, and store the 
length of $V_{min}$ using $\lg{n}$ bits (see $V$ in Figure \ref{fig:minmax2}).
If there are $k$ elements such that $A[i-1] = A[i]$ for $1 \leq i \leq n$, 
Fischer and Heun's stack based algorithm~\cite{fh-sjc11}
does not pop any elements from both stacks when these $k$ elements and $A[n]$ are pushed into them.
Before pushing any of the remaining elements into the min- and max-stacks, 
we pop some elements from exactly one of the stacks.
Also after pushing $A[1]$ into both the stacks, we pop the remaining 
elements from the stacks in the final step.
%The total number of open parentheses in $D_{\CMinA{}}$ and $D_{\CMaxA{}}$ is $n$.
Suppose
%the min-stack pops $p$ times during the stack based algorithm (i.e., if 
the $n$ elements are popped from the min-stack during $p$ runs of pop operations.
Then, it is easy show that the elements are popped from the max-stack during 
$n-k-p$ runs of pop operations.
Also, $p$ ($n-k-p$) is the number of leftmost children in $\CMinA$ ($\CMaxA$) 
since each run of pop 
operations generates exactly one open parenthesis whose 
matched closing parenthesis corresponds to the leftmost child in $\CMinA{}$ ($\CMaxA$). 
As described in Section~\ref{sec:exdfuds}, 
the size of $V_{min}$  is $n-p+1$ bits, and that of $V_{max}$ is $p+k+1$ bits. 
%and the size of $V_{max}$ is $p + k+1$ bits
Thus, the total size of $V$ is $n+k+2$ bits.
%\begin{theorem}
%\label{theorem:vminvmax}
%For an array $A[1 \dots n]$ of length $n$, there is an encoding %for $A$ 
%which takes at most $4.585n+o(n)$ bits and supports $\RMinQ{}_A$, 
%$\RRMinQ{}_A$, $\RLMinQ{}_A$, $\RkMinQ{}_A$, $\PSV{}_A$, $\NSV{}_A$ 
%$\RMaxQ{}_A$, $\RRMaxQ{}_A$, $\RLMaxQ{}_A$, $\RkMaxQ{}_A$, 
%$\PLV{}_A$ and $\NLV{}_A$ queries in constant query time.
%\end{theorem}
%\begin{proof}

Therefore, we can decode any $\lg{n}$-bit substring of $V_{min}$ or $V_{max}$ 
in constant time using $V$ and the length of $V_{min}$.
By combining these structures with the encoding of Theorem~\ref{theorem:minmax}(b), 
we can support the queries in Theorem~\ref{theorem:minmax}(d) 
(namely, the queries $\RMinQ{}_A$, $\RRMinQ{}_A$, $\RLMinQ{}_A$, $\RkMinQ{}_A$, $\PSV{}_A$, $\NSV{}_A$ 
$\RMaxQ{}_A$, $\RRMaxQ{}_A$, $\RLMaxQ{}_A$, $\RkMaxQ{}_A$, $\PLV{}_A$ and $\NLV{}_A$)
in constant time.
%all the queries in  Lemma~\ref{lemma:extenddfuds} in constant time.
By Lemma~\ref{lemma:encoding}, 
the total space of these structures is $4n-k+\lg{n \choose k}+o(n) \le ((3+\lg{3})n+o(n) < 4.585n+o(n)$ bits.
This proves Theorem~\ref{theorem:minmax}(d).

Note that if $k=0$ (i.e, there are no consecutive equal elements), 
$E'$ on $A$ is same as $E$ on $A$, and the size of $V$ is $n+2$ bits.
Therefore we can support all the queries in Theorem~\ref{theorem:minmax}(d) using $4n+o(n)$ bits
with constant query time.
%\qed
%\end{proof}

\subsection{Encoding colored $2d$-Min and $2d$-Max heaps using less space}
In this section, we give new encodings that prove Theorem~\ref{theorem:minmax}(a) and Theorem~\ref{theorem:minmax}(c), 
which use less space but take more query time than the previous encodings.
%In this section, we give new encoding \marginpar{\textcolor{red}{reviewer 2 - 8}}
%which uses less space but takes more query time than the previous encoding that prove Theorem~\ref{theorem:minmax}(a) and Theorem~\ref{theorem:minmax}(c).
%\begin{theorem}
%\label{theorem:encodingmin2}
%For an array $A[1 \dots n]$ of length $n$, there is an encoding %for $A$ 
%which takes at most $4.088n+o(n)$ bits
%and supports $\RMinQ{}_A$, $\RRMinQ{}_A$, $\RLMinQ{}_A$, $\RkMinQ{}_A$, $\PSV{}_A$, $\NSV{}_A$ 
%$\RMaxQ{}_A$, $\RRMaxQ{}_A$, $\RLMaxQ{}_A$, $\RkMaxQ{}_A$, $\PLV{}_A$ and $\NLV{}_A$ queries.
%\end{theorem}
%\begin{proof}
%Different from the previous encoding, 
To prove Theorem~\ref{theorem:minmax}(a), we maintain the encoding $E'$ on $A$, with the modification
that instead of $T'$ (which takes at most $2n -k$ bits), we store the bit string $T$ (which takes at most $2(n-k)$ bits) 
which is obtained by constructing the encoding $E$ on $A'$.
Note that $f(s,c)$ is well-defined when $s$ and $c$ are substrings of $D_{\CMinAp{}}$  and C, respectively.
%($k$ is defined same as the Section \ref{subsubdd}) 
If there are $k$ elements such that $A[i-1] = A[i]$ for $1 \leq i \leq n$, then the total size of 
%the encoding is at most $2(n-k)+(n-k)+o(n)$ bits, 
%$T_{f}$ and compressed $C$ in 
%Theorem~\ref{theorem:minmax}(b) are also preserved.
%Therefore the total size of 
the encoding is at most 
$3(n-k)+\lg{n \choose k}+o(n) \le n\lg{9}+o(n) < 3.17n+o(n)$ bits. 
If we can reconstruct the sequences $D_{\CMinA{}}$ and $D_{\CMaxA{}}$, by 
Lemma~\ref{lemma:extenddfuds}, we can support all the required queries.
We now describe how to decode the entire $D_{\CMinA{}}$ 
using this encoding. 
(Decoding $D_{\CMaxA{}}$ can be done analogously.)

Once we decode the sequence $D_{\CMinAp{}}$, % using $E$ on $A'$.
we reconstruct the sequence $D_{\CMinA{}}$ by scanning the sequences 
$D_{\CMinAp{}}$ and $C$ from left to right, and using the lookup table $T_f$.
Note that $f(D_{\CMinAp{}}, C)$ looses some open parentheses in $D_{\CMinA{}}$ 
whose matched close parentheses are not in $D_{\CMinAp{}}$ but in $f(D_{\CMinAp{}}, C)$.
So when we add $m$ consecutive close parentheses from the $r$-th position 
in $D_{\CMinAp{}}$ in decoding with $T_f$, we add $m$ more open parentheses 
before the position $pos = \findopen_{D_\MinAp{}}(r-1)$. This proves Theorem~\ref{theorem:minmax}(a).

To prove Theorem~\ref{theorem:minmax}(c), we combine the concatenated sequence of $V_{min}$ and $V_{max}$ on $A'$ 
whose total size is $n-k+2$ bits to the above encoding. Then we can reconstruct $V_{min}$ on $A$
by adding $m$ extra $1$'s before 
%the position $\rank{}_{D_\MinAp{}}(pos, `((' )$ 
$V_{min}[\rank{}_{D_\MinAp{}}(pos, `((' )]$ when $m$ consecutive close parentheses are added from the $r$-th position 
in $D_{\CMinAp{}}$ while decoding with $T_f$.
(Reconstructing $V_{max}$ on $A$ can be done in a similar way.)
%Therefore we can support
%$\RMinQ{}_A$, $\RRMinQ{}_A$, $\RLMinQ{}_A$, $\RkMinQ{}_A$, $\PSV{}_A$, $\NSV{}_A$ 
%$\RMaxQ{}_A$, $\RRMaxQ{}_A$, $\RLMaxQ{}_A$, $\RkMaxQ{}_A$, $\PLV{}_A$ and $\NLV{}_A$ 
%the queries in Theorem~\ref{theorem:minmax}(c)
The space usage of this encoding is $4(n-k)+\lg{n \choose k}+o(n) \le n\lg{17}+o(n) < 4.088n+o(n)$ bits.
This proves Theorem~\ref{theorem:minmax}(c).

\section{Conclusion}
\label{sec:concl}
We obtained space-efficient encodings that support a large set of range and previous/next smaller/larger value queries.
%namely, $\RMinQ{}$, $\RRMinQ{}$, $\RLMinQ{}$, $\RkMinQ{}$, $\PSV{}$, $\NSV{}$ 
%$\RMaxQ{}$, $\RRMaxQ{}$, $\RLMaxQ{}$, $\RkMaxQ{}$, $\PLV{}$ and $\NLV{}$ in constant time.
The encodings that support the queries in constant time take more space than the ones that do not support the queries in constant time.
%When the query time is not of concern, our alternative encoding uses even less space. 
%Both these encodings
%In this paper, we proposed  the $4.585n+o(n)$-bits encoding that 
%encode the colored $2d$-Min heap and the colored $2d$-Max heap 
%in a single data structure 
%while supporting $\RMinQ{}$, $\RRMinQ{}$, $\RLMinQ{}$, $\RkMinQ{}$, $\PSV{}$, $\NSV{}$ 
%$\RMaxQ{}$, $\RRMaxQ{}$, $\RLMaxQ{}$, $\RkMaxQ{}$, $\PLV{}$ and $\NLV{}$ in constant time.
%If we do not care the query time, we can encode above queries using $4.088n+o(n)$-bits.
\\\\
We conclude with the following open problems.
%propose some open problems as follows.
\begin{itemize}
\item
Can we support the queries in the Theorem~\ref{theorem:minmax}(c) in $O(1)$ time using at most $4.088n+o(n)$ bits?
\item
As described in Section~\ref{sec:prelim}, 
Gawrychowski and Nicholson~\cite{Gawry14} show that any encoding that supports 
both $\RMinQ{}_A$ and $\RMaxQ{}_A$ requires at least $3n-\Theta(\lg{n})$ bits. 
%This lower bound holds even if we assume that there are no consecutive equal elements in $A$. 
Can we obtain an improved lower bound in the case when we need to support the queries in 
Theorem~\ref{theorem:minmax}(a)?
%Can we improve this bound in the case when we also need to support $\PSV{}$ and 
%$\PLV{}$.
%This lower bound holds even if we assume that there are no consecutive equal elements in $A$. 
%In the general case, can we improve the lower bound (or, is this bound tight)? 
%the lower bound be less than $3.167n+o(n)$-bits?
%\item
%What is the lower bound on space for encoding that supports the queries in Theorem~\ref{theorem:minmax}(c)?
\item Can we prove a lower bound that is strictly more than $3n$ bits for any 
encoding that supports the queries in Theorem~\ref{theorem:minmax}(c)?
\end{itemize}

\noindent
{\bf Acknowledgments. } We thank Ian Munro and Meng He for helpful discussions, 
and Patrick Nicholson and Pawel Gawrychowski for pointing out mistakes in an earlier version.

\end{document}